\newtheorem{theorem}{Theorem}[section]
\newenvironment{proof}[1][Proof]{\begin{trivlist}
\item[\hskip \labelsep {\bfseries #1}]}{\end{trivlist}}
\newcommand{\qed}{\nobreak \ifvmode \relax \else
      \ifdim\lastskip<1.5em \hskip-\lastskip
      \hskip1.5em plus0em minus0.5em \fi \nobreak
      \vrule height0.75em width0.5em depth0.25em\fi}
\begin{document}
%
\title{On Improving the Representation of a Region\\ Achieved by a Sensor Network}
\author{\IEEEauthorblockN{Xiaoyu Chu and Harish Sethu}\\
\IEEEauthorblockA{Department of Electrical and Computer Engineering\\
Drexel University\\
Philadelphia, PA 19104-2875\\
Email: \{xiaoyu.chu, sethu\}@drexel.edu}
}

\maketitle
\thispagestyle{empty}
~\vskip 0.5in
\begin{abstract}
This report considers the class of applications of sensor networks in
which each sensor node makes measurements, such as temperature or
humidity, at the precise location of the node. Such {\em spot-sensing}
applications approximate the physical condition of the entire region of
interest by the measurements made at only the points where the
sensor nodes are located. Given a certain density of nodes in a
region, a more spatially uniform distribution of the nodes leads to a
better approximation of the physical condition of the region. This
report considers the error in this approximation and seeks to improve
the quality of representation of the physical condition of the points
in the region in the data collected by the sensor network. We develop
two essential metrics which together allow a rigorous quantitative assessment of
the quality of representation achieved: the average representation
error and the unevenness of representation error, the latter based on
a well-accepted measure of inequality used in economics. We present the
rationale behind the use of these metrics and derive relevant
theoretical bounds on them in the common scenario of a planar region
of arbitrary shape covered by a sensor network deployment. A simple
new heuristic algorithm is presented for each node to determine if
and when it should sense or sleep to conserve energy while also
preserving the quality of representation. Simulation results show that
it achieves a significant improvement in the quality of representation
compared to other related distributed algorithms. Interestingly,
our results also show that improved spatial uniformity has the welcome
side-effect of a significant increase in the network lifetime.\footnote{A
  preliminary version of this manuscript appeared in {\em Proceedings
    of IEEE INFOCOM 2009}. This research was partially funded by NSF
  Award CNS-0626548.}
\end{abstract}

\newpage
\section{Introduction}
\label{sec:intro}

Networks of inexpensive low-power sensor nodes may be deployed to
sense, gather and process information in a region of interest for a
variety of purposes including surveillance, target tracking,
wildlife monitoring and pollution studies \cite{AkySu2002}. Based on
the expected behavior of individual nodes, these applications of
sensor networks may be broadly categorized into two types: {\em
  area-sensing} applications and {\em spot-sensing}
applications. Examples of area-sensing applications include enemy
surveillance, target tracking, intrusion detection and wildlife
monitoring through audio/image/video recording; in these applications,
sensor nodes make relevant observations within a local {\em sensing
  area} using vision, sound, seismic-acoustic energy, infrared energy, or magnetic
field changes. On the other hand, in spot-sensing applications, each
sensor node makes measurements of physical phenomena such as
temperature, humidity and environmental pollution at precisely the
spot where it is located, and there is no concept of a sensing area.
The physical condition of each point in the region of interest is
represented in the data collected from nearby active sensor nodes.
The farther the nearest active nodes are from a point, the poorer is
the representation of the physical condition at the point in the
data collected by the sensor network. For example, if most of the
active sensor nodes are clustered together in one corner of a
region, the quality of representation of the region is likely to be
poor. A more spatially uniform distribution, however, will lead to
an improved quality of representation. In this report, we consider
spot-sensing applications and introduce the problem of improving
this quality of representation in the data collected by the sensor
network.

The problem of improving the quality of representation is related but
different from the coverage problems typically considered for area-sensing
applications
\cite{MegFar2001,HuaTse2005,CarGra2006,LiWan2003,GalCar2008,HuaLo2006,ShaSri2003}. In
most of these coverage problems, a region is considered $k$-covered if
all points in it are within the sensing area of at least $k$ active
nodes. Such a notion of coverage, while appropriate for area-sensing
applications, is not relevant for spot-sensing applications where
there is no concept of a sensing area. In spot-sensing
applications, the quality of representation enjoyed by a point in the
region depends on the desired spatial granularity with which the physical
condition needs to be sampled and on some function of the distances to
the nearest set of active sensor nodes around the point. This report
introduces new metrics that help evaluate the quality of
representation achieved by a sensor network deployed in a
region.

Energy being a key constraint in most sensor networks, this work
assumes that a sensor node can be programmed to make a choice at
specific intervals of time on whether it should be in the {\em
sense} mode (also referred in this report interchangeably as the {\em
active} mode) or the sleep mode (in which its sensing module is
turned off). An additional goal in spot-sensing applications becomes
one of developing a distributed algorithm to determine sleep/sense
times with the specific goals of (i) conserving energy, (ii)
achieving the desired spatial granularity with which the physical
condition in the region is sampled by achieving the appropriate
spatial density of active nodes, and (iii) finally, achieving a high
quality of representation of the region at all times by the network
of active nodes.

The metrics for the quality of representation and the above goals of a
distributed algorithm are also relevant in the context of sensor
networks with transducer heterogeneity. It is becoming
increasingly common in real world sensor network applications to
integrate data from several different types of transducers
\cite{Cro,IEEE1451}. Microsensors, especially those using
microelectromechanical systems (MEMS), permit the sensing of a
variety of physical phenomena on a single sensor node
\cite{CulEst2004}. Sensor nodes such as the Berkeley MICA Mote
typically integrate several transducer types, such as for
acceleration, temperature, light and sound, on a single board
\cite{HilSze2000}. Each sensor node typically has dynamic control
over which transducers are active. Since different physical
phenomena generally require sensing at different spatial
granularities, one can avoid unnecessary energy consumption by
activating only a subset of transducers at each of the sensor nodes.
This calls for distributed algorithms executed by all the sensor
nodes to automate the process of determining which transducers
should be activated on which nodes based on the desired density of
each transducer type while also ensuring that all points in the
region are well-represented by measurements made by each transducer
type.

\subsection{Problem Statement}\label{sec:ProblemStatement}

Consider $N$ sensor nodes distributed within a certain region of
interest, denoted by $R$. Let $G^\prime=(V^\prime,E^\prime)$ denote
the graph of these sensor nodes where each node $u \in V^\prime$
represents a sensor node and each edge $(u,v) \in E^\prime$
represents the fact that nodes $u$ and $v$ are neighbors and can
communicate directly with each other. Let $d(u,v)$ denote the
Euclidean distance between sensor nodes represented by vertices $u$
and $v$. The Euclidean distances between nodes are computable if the
nodes are all fitted with low-power GPS receivers, or through
location estimation techniques if only a subset of nodes are
equipped with GPS receivers \cite{NicNat2003}, or by estimating
distances based on exchanging transmission and reception powers
\cite{KriBie1997}.

Let $z$ denote the {\em desired spatial density}, in number of
active nodes per unit area, determined based on the spatial
granularity with which the physical phenomenon of interest should be
sensed. Let $G = (V, E)$ denote the subgraph of $G^\prime$ such that
$v \in V$ iff vertex $v$ represents an active node (as opposed to
one in sleep mode). In this report, we do not require that $G$ be a
connected graph (because, in many applications, retrieval of data from
sensor nodes may be accomplished through mobile gateways
\cite{RamChe2008}). The problem now is one
of determining $G$ in a distributed manner so that every point in
the region is well-represented by the active sensor nodes.

There are two key aspects to this problem:
\begin{enumerate}
\item What are the metrics that one should use to measure
  the quality of representation achieved by $G$?
\item Given the metrics for quality of representation, what is a distributed
  algorithm that one should employ to determine $G$ (i.e., to
  determine which nodes should sense and which nodes should sleep)
  while achieving a high quality of representation?
\end{enumerate}

\subsection{Contributions and Organization}
We propose a new problem described in the previous subsection
specifically for spot-sensing applications in sensor networks. We
develop a pair of metrics that together allow a quantitative
assessment of the quality of representation:
the {\em average representation error} of the points in the
region and the {\em unevenness of representation error}
across the points in the region. Section~\ref{sec:metrics} presents
these metrics along with the rationale behind them. Based on the
average representation error of the points in the region,
Section~\ref{sec:metricD} develops a metric normalized by the desired
spatial density to allow for comparative evaluations of the quality of
representation achieved by a network across different desired spatial
densities. Section~\ref{sec:metricU} borrows from the field of
economics and uses the Gini index, a well-accepted measure of
inequality, to develop a new metric for the
unevenness of representation error among the points in the
region. Section~\ref{sec:WhyTwoMetrics} discusses the need for both of
these complementary metrics. Lower bounds on both metrics are derived in
Section~\ref{sec:lowerBounds}. Upper bounds on these lower bounds for
the common scenario of a continuous two-dimensional region covered by
a sensor network are derived in Appendix~\ref{sec:uniform_bounds}.
Section~\ref{relatedWork} discusses work in sensor networks as
well as in other fields which seek to solve similar or related underlying
mathematical problems.

Section~\ref{sec:algorithm} develops a generalized, distributed
algorithm, called EvenRep($\cal{F}$,\,$L$), to achieve a better
quality of representation for points in the region of interest. The
algorithm, a simple heuristic, is parametrized by two quantities:
$\cal{F}$, a {\em target distance function} which specifies the
desired distance between an active node and its $k$-th nearest active
neighbor and $L$, the maximum number of active neighbors that a node
should consider in making its decision to sense or sleep. The
algorithm seeks to achieve the target distance function for all active
nodes. The target distance function, $\cal{F}$, can depend on whether
the region of interest is
2-dimensional or 3-dimensional, the type of application, or any
spatial constraints specific to the region. Section~\ref{evenCover}
describes the pseudo-code of the algorithm and the rationale behind
it. We find that the ideal target distance function is one based on
the region of interest being tessellated by congruent hexagonal cells
with an active sensor node at the center of each cell. We denote this
target distance function by $H$ and Section~\ref{EvenCoverH} describes
EvenRep($H, L$), used in our simulation results.

Section~\ref{results} presents several simulation results on the
performance of EvenRep($H,L$) and some other representative
algorithms. The results show that the EvenRep($H,L$) algorithm
achieves a significant improvement in the quality of representation in
comparison to other algorithms. We show that achieving an improved
quality of representation has a welcome side-effect of significantly
improving the network lifetime. In fact, we show that
EvenRep($H, L$) achieves almost a $50\%$ increase in the network
lifetime in comparison to other related distributed
algorithms. Section~\ref{conclusion} concludes the report.

\section{The Metrics}
\label{sec:metrics}

In this section, we develop metrics to quantify the quality of
representation achieved in a sensor network deployment for
spot-sensing applications. Past research that discusses related
metrics has largely assumed a system model that is more appropriate
for coverage problems in area-sensing applications
\cite{MegFar2001,HuaTse2005,CarGra2006,LiWan2003,GalCar2008,HuaLo2006,ShaSri2003}. In
these problems, each sensor node has a pre-defined sensing range
and the goal is to ensure that each point in the region of interest is
$k$-covered, i.e., lies within the sensing area of at least
$k$ active sensor nodes. As opposed to coverage at a point in an area-sensing
application, the quality of representation of a point in a spot-sensing
application is not easily captured in an either-or binary manner, an
implication of the fact that there is no concept of a sensing area in
spot-sensing applications. Even modified coverage problems for
area-sensing applications, such as when a point is considered either
covered or uncovered with a probability that is a function of the
distance to the nearest sensor node \cite{WanTse2008}, do end up
imposing a binary either-or assessment that is not useful to assessing
the quality of representation of the point. Also, metrics based on the
distances between active nodes (e.g., \cite{ClaEva1954}) used in
solving different problems do not capture the quality of
representation for spot-sensing applications, a quantity that is more
about the points in the region of interest than the distances between
neighboring active nodes.

The quality of representation of a point depends on the error in the
representation of the point in the data collected by the sensor
nodes. As mentioned in Section~\ref{sec:intro}, this error depends on
some function of the distances between the point and the nearest
active nodes. This function may be different for different physical
conditions and is sometimes known (as discussed in \cite{KraGue2006})
but, most often, is unknown before network deployment. For
clarity of presentation, we describe our work using the
case in which the error in the representation of a point may be
assumed to be directly proportional to the distance between the point
and its nearest active sensor node (the error is zero if there is an active
sensor node exactly at that point). However, the metrics of quality of
representation that we develop can be readily adapted to other cases
with different relationships between representation error at a point
and the distances to the nearby active nodes. Further, the heuristic
algorithm we present later in this report is also independent of this
assumption. Also for clarity, we present this work assuming
that the region covered by the sensor network is a $2$-dimensional
plane. The metrics presented here and the algorithm can be readily
adapted to the $3$-dimensional case.

Thus, the average of representation errors at all points in the
region, normalized by the desired spatial granularity of active nodes
for the physical condition being sampled, is one aspect of the quality
of representation of the region. However, as we will show later in
this section, a low average representation error alone does not tell
the whole story and that an even spread of these values is also an
essential aspect of the quality of representation. In the following,
we formalize and develop a rigorous definition of two metrics: the
{\em average representation error} based on the normalized average of
the distances of the points to their respective nearest active nodes,
and the {\em unevenness of representation error} based on the
distribution of these distances.

\subsection{Average Representation Error}
\label{sec:metricD}

Let $d_p(v)$ denote the distance of node $v$
from point $p$. Let $n_p(G)$ denote the nearest node in $G$ (the set
of active nodes) from point $p$. Let $\overline{d_R(G)}$ denote the average
value of $d_p(n_p(G))$ over all points $p$ in region
$R$. $\overline{d_R(G)}$ may also be thought of as the expected value
of $d_p(n_p(G))$ for a random point $p$ in the region. Intuitively,
given the same area of the region of interest and the same number of
active sensor nodes, the smaller the value of $\overline{d_R(G)}$
the lower the representation error. However, as mentioned in
Section~\ref{sec:intro}, the representation error should also depend on the desired
spatial density of active nodes required for sampling of
the physical phenomenon at the point (for example, particulate
pollution may have to be sampled at a higher spatial granularity than
temperature and so, the same average distances may not imply the same
representation error).  Different applications can tolerate different
average distances between points and the nearest active nodes for the
same quality of representation; therefore, without knowledge of the
desired spatial density, $z$, the value of $\overline{d_R(G)}$ reveals
little about the quality of representation achieved for an
application. Therefore, an appropriate metric is one that uses the
average distance, $\overline{d_R(G)}$, normalized by the average
distance in the best-case scenario at the desired spatial density.

\begin{figure}[!t]
\begin{center}
    \subfloat[{One active node at the center of a unit square area.}]{
      \hskip0.3in
      \includegraphics[height=0.7in]{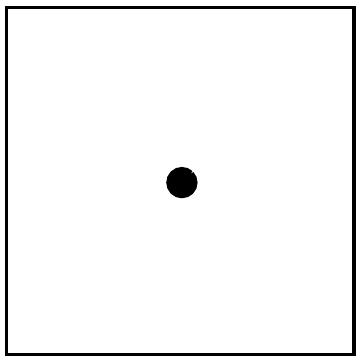}
      \hskip0.3in
        \label{fig:orignal}
        }\hskip0.3in
    \subfloat[{An active node placed at the center of each of the four
      quarters of a unit square area.}]{
      \hskip0.3in
        \includegraphics[height=0.7in]{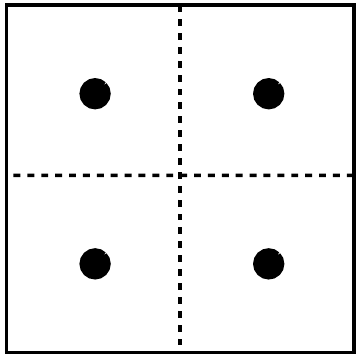}
      \hskip0.3in
        \label{fig:four_times}
        }
    \caption{An example to illustrate the average representation error
      as a metric; $D(G,R)$ is the same in the two cases when the
      desired spatial density in (a) is 1 but in (b) is 4.}
    \label{fig:avg}
\end{center}
\end{figure}

The best-case scenario occurs when the region of interest can be
covered in a space-filling fashion by non-overlapping circular areas
with an active sensor node at the center of each circular area. Note
that such a scenario is not realistic and is used here only as a
means to derive a normalization factor in the metric. Given a
desired spatial density of $z$, the radius of these circular areas
in the best-case scenario is given by $r= 1/\sqrt{z\pi}$ (recall
that $z$ denotes the desired number of active nodes per unit area,
the size of each circular area is $\pi r^2$, and therefore, $z =
1/\pi r^2$). The expected distance from points in the region to the
nearest sensor node in this case is given by:
\[
\int_0^r \frac{2\pi x}{\pi r^2} x dx =
\frac{2r}{3}=\frac{2}{3\sqrt{z\pi}}
\]
Thus, the normalized expected distance of points to their respective
nearest nodes is given by:
\[
\overline{d_R(G)} \left( {\displaystyle \frac{3\sqrt{z\pi}}{2}}
\right)
\]
Dispensing with the constant, $3\sqrt{\pi}/2$, we
define the {\em average representation error}, denoted by $D(G,R)$, as:
\begin{equation}
D(G,R)=\overline{d_R(G)} \, \sqrt{z}
\label{equ:strength of coverage}
\end{equation}

Fig.\,\,\ref{fig:avg} is illustrative of the average representation
error as a metric. Consider a square region of
interest of unit area. The average representation error in
Fig.\,\,\ref{fig:orignal} when the desired spatial density is 1 is
the same as the average representation error when the desired
spatial density is 4.

\subsection{Unevenness of Representation Error}
\label{sec:metricU}

The field of economics has a long history of
measuring inequality and a vast body of literature on the topic
\cite{Cow1977,MarOlk1979}. For measuring the unevenness of
representation error, we use a popular and well-accepted metric in
economics, the Gini index, based on the relative mean difference
between the quantities being compared (in our case, the quantities
are distances of points to their respective nearest active sensor
nodes). Consider $m$ quantities, $g_1 \leq g_2 \leq \dots \leq g_m$.
The mean difference between these quantities is:
\[
\Delta = \frac{1}{m^2} \sum_{i=1}^m \sum_{j=1}^m | g_i - g_j |
\]
The relative mean difference is the mean difference divided by the
mean, $\overline{g}$. The Gini index is defined as one-half of the
relative mean difference, i.e.,
\[
{\rm Gini~index} = \frac{\Delta}{2\overline{g}} =
\frac{1}{2\overline{g}m^2} \sum_{i=1}^m \sum_{j=1}^m | g_i - g_j |
\]
Adapting the Gini index to the context of our problem poses
one issue: the number of quantities we have is infinite because of
the infinite number of points in any region of interest. Therefore,
instead of using summations, we consider expected values in defining
unevenness. Let $p$ and $q$ denote two arbitrary random points in the
region of interest $R$. We define the {\em unevenness of
  representation error}, $U(G, R)$, of graph
$G$ in the region $R$ as:
\begin{equation}
U(G, R) = \frac{E[ | d_p(n_p(G)) - d_q(n_q(G)) |
  ]}{2\overline{d_R(G)}}
\label{equ:giniIndex}
\end{equation}
where $p, q \in R$. The smaller the value of the above quantity, the
better the spatial uniformity.

\subsection{One Metric or Two?}
\label{sec:WhyTwoMetrics}

\begin{figure}[!t]
\begin{center}
\includegraphics[width=2in]{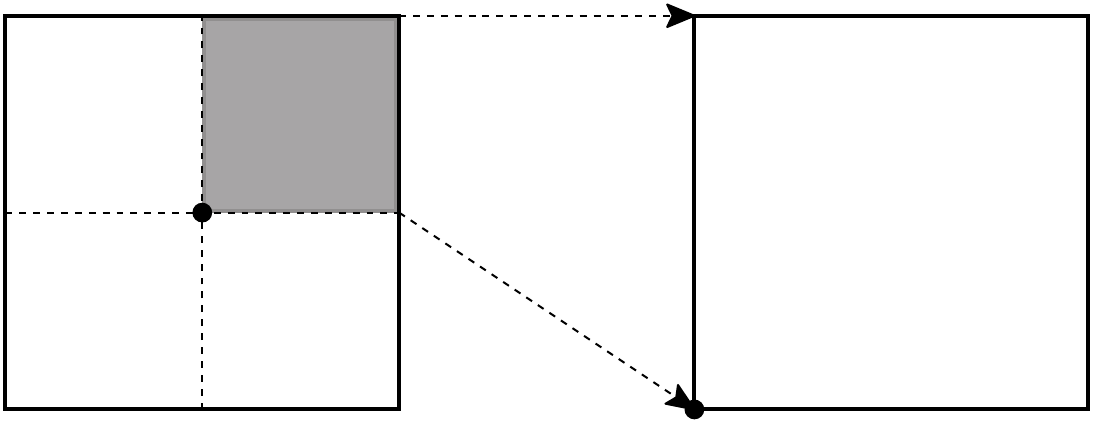}
\end{center}
\caption{An example illustrating that improving $U(G,R)$ does not necessarily improve $D(G,R)$.}
\label{fig:1}
\end{figure}

For both metrics, $D(G,R)$ and $U(G,R)$, a smaller value of the metric
implies better quality of representation. A legitimate question at
this point is whether minimizing one also minimizes the other, i.e.,
whether we need both of the above two metrics or if one of the
metrics above can serve as the sole metric for measuring the quality
of representation. We answer this using two simple examples: one in
which $U(G,R)$ is minimized but $D(G,R)$ is not; another in which
the $D(G,R)$ is minimized but $U(G,R)$ is not.

{\em Does improving the evenness also improve the average?}
Fig.\,\,\ref{fig:1} considers two regions, each of unit square area
but with different sensor node deployments for a desired spatial
density of 1 (i.e., we wish to place exactly one sensor node in the
region). In the first of the two regions, a sensor node is placed at
the centroid of the area while in the other region, the node is
placed at one of the corners of the square region. Note that both
$D(G,R)$ and $U(G,R)$ are minimized in the former case for the
desired spatial density of 1. Also note that $D(G,R)$ is maximized
in the latter case. We now claim that the unevenness of representation
is equal in the two cases and also the best achievable. The unevenness
of representation error depends on the normalized distribution of
the distances from the points in the region to the one active sensor
node. Dividing the square region of interest into four quarters as
shown in the region on the left-hand side in Fig.\,\,\ref{fig:1}, we
note that this distribution for points within each of the quarters
is identical to each other. Since the overall distribution of these
distances in the full square region of unit area is composed of the
identical distributions within each of the quarters, the unevenness of
representation in the region of unit square area in the left-hand
side region is the same as that within each of the quarters. Now,
the sensor node deployment shown in the region on the right-hand
side of Fig.\,\,\ref{fig:1} can be thought of as an enlargement of
one of the quarters in the region on the left-hand side, and
therefore, achieving the same degree of evenness. This example shows
that $U(G,R)$ can be the minimum possible when $D(G,R)$ is the
minimum or the maximum possible. This shows that achieving the
lowest possible unevenness of representation error does not
necessarily achieve the best average representation error; in fact,
sometimes it can even lead to the worst possible average
representation error.

\begin{figure}[!t]
\begin{center}
    \subfloat[{A placement in which $U(G,R)$ is minimized but $D(G,R)$
      is not.}]{
        \includegraphics[height=1.2in]{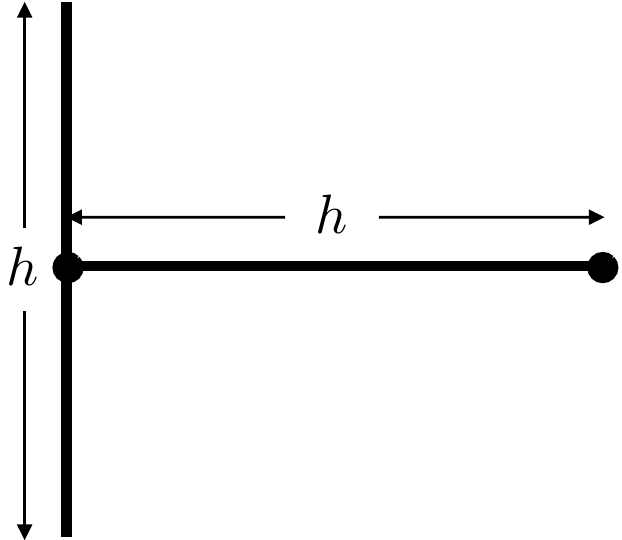}
        \label{fig:Ta}
        }\hskip0.25in
    \subfloat[{A placement in which $D(G,R)$ is minimized but $U(G,R)$ is not.}]{
        \includegraphics[height=1.2in]{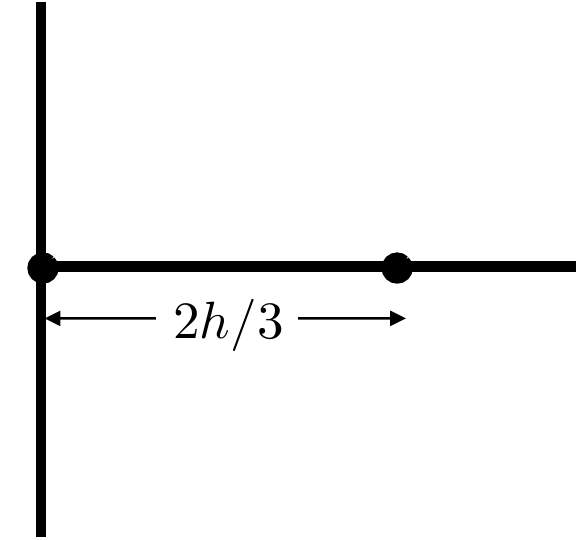}
        \label{fig:Tb}
        }
    \caption{An example illustrating that improving $D(G,R)$ does not necessarily improve $U(G,R)$.}
    \label{fig:T}
\end{center}
\end{figure}
{\em Does improving the average also improve the evenness?} Consider
a toy example of a region composed of two one-dimensional regions
arranged in the form of a sideways `T' as shown in
Fig.\,\,\ref{fig:T}. Assume that the desired spatial density
corresponds to placing two active sensor nodes in the region.
Fig.\,\,\ref{fig:Ta} illustrates a placement in which $U(G,R)$ is
minimized. Fig.\,\,\ref{fig:Tb} shows another placement of the two
sensor nodes in which $D(G,R)$ is minimized but $U(G,R)$ is not
minimized.

The two examples above show that improving the evenness does not
necessarily improve the average representation error and that
improving the average does not necessarily improve the evenness of
representation error. Therefore, both metrics are essential to
gaining insight into the quality of representation (though different
metrics may rank differently in their importance to different
applications).

\subsection{Lower Bounds on $D(G,R)$ and $U(G,R)$}
\label{sec:lowerBounds}

For all of our subsequent analysis, it is insightful to have values
of the lowest possible average representation error and the lowest
possible unevenness of representation error. The following theorem
proves these bounds.

\begin{theorem}
The lower bound on $D(G,R)$ is $2/(3\sqrt{\pi}) \approx 0.376$ and
the lower bound on $U(G,R)$ is 0.2.\label{theorem:lowerBound}
\end{theorem}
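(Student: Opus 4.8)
The plan is to establish the two lower bounds separately, using an optimization-over-shapes argument for $D(G,R)$ and a scale-invariance plus computation argument for $U(G,R)$.

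For the bound $D(G,R) \geq 2/(3\sqrt{\pi})$, I would argue as follows. Fix a region $R$, an active set $G$, and the desired density $z$ with $|G| = z \cdot \mathrm{area}(R)$ (the number of active nodes matched to the density). Assign to each active node $v$ its Voronoi cell $C_v$ within $R$, so the cells partition $R$ and the nearest active node to a point $p \in C_v$ is $v$. Then
\[
\overline{d_R(G)} = \frac{1}{\mathrm{area}(R)} \sum_{v \in G} \int_{C_v} d_p(v)\, dp.
\]
The key step is a single-cell isoperimetric-type inequality: among all planar regions $C$ of a fixed area $A$ containing a designated point $v$, the integral $\int_C \|p - v\|\, dp$ is minimized when $C$ is a disk centered at $v$. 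This is intuitively clear — to minimize the average distance to $v$ for a fixed amount of area, one should pack the area as close to $v$ as possible, which means a centered disk — and can be proved by a standard rearrangement/bathtub argument (for fixed area, moving mass from far radii to near radii only decreases the integral, and among sets of given area the disk is exactly the set minimizing the radial extent). Applying this cell-by-cell, writing $A_v = \mathrm{area}(C_v)$ and using the best-case single-disk computation already carried out in the excerpt ($\int_0^{\rho} \frac{2\pi x}{\pi \rho^2}\, x\, dx \cdot \pi\rho^2 = \frac{2}{3}\rho \cdot \pi \rho^2$ with $\rho = \sqrt{A_v/\pi}$), each term is at least $\frac{2}{3}\sqrt{A_v/\pi}\, A_v$. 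Summing and then applying a convexity (Jensen / power-mean) inequality to the function $A \mapsto A^{3/2}$ under the constraint $\sum_v A_v = \mathrm{area}(R)$ and $|G| = z\,\mathrm{area}(R)$ shows the sum is minimized when all cells have equal area $1/z$, giving $\overline{d_R(G)} \geq \frac{2}{3\sqrt{z\pi}}$, hence $D(G,R) = \overline{d_R(G)}\sqrt{z} \geq \frac{2}{3\sqrt{\pi}}$.

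For the bound $U(G,R) \geq 0.2$, I would first note that $U(G,R)$ is invariant under scaling of the region (both numerator and denominator in \eqref{equ:giniIndex} scale linearly in distance) and under replacing $R$ by a disjoint union of congruent scaled copies of the same configuration (the distance distribution is unchanged). Combined with the cell decomposition above, the relevant distance distribution is a mixture of the per-cell distributions; I would argue — as the paper itself does in the $U(G,R)$ discussion surrounding Fig.~\ref{fig:1} — that the unevenness is minimized when every cell is a disk with its node at the center and all disks are congruent, reducing the problem to computing $U$ for a single point at the center of a disk. For a disk of radius $r$ with the node at the center, $X = d_p(n_p(G))$ is the distance from a uniform random point to the center, which has density $f(x) = 2x/r^2$ on $[0,r]$; then $\overline{d} = E[X] = 2r/3$ and $E|X - Y|$ for i.i.d. copies $X,Y$ is a one-dimensional double integral $\int_0^r \int_0^r |x - y| \frac{2x}{r^2}\frac{2y}{r^2}\, dx\, dy$, which evaluates to $\frac{8r}{15}$ after a routine computation. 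Hence $U = \frac{8r/15}{2 \cdot 2r/3} = \frac{8/15}{4/3} = \frac{2}{5} \cdot \frac{1}{... }$ — more precisely $\frac{8/15}{4/3} = \frac{8}{15}\cdot\frac{3}{4} = \frac{2}{5} = 0.2$, giving the claimed value, and I would show it is optimal via the rearrangement argument that a centered disk minimizes this normalized mean difference among equal-area cells.

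The main obstacle I anticipate is rigorously justifying the two optimality claims — that the centered disk is the extremal single cell for $\int_C \|p-v\|\,dp$ (easy, bathtub principle) and, more delicately, that the centered-disk-and-equal-cells configuration actually minimizes the mixture-distribution Gini index $U$. The latter is subtle because $U$ is a ratio of two functionals of the distribution, so it is not simply minimized term-by-term over cells; I would handle it by first fixing the common mean (using scale invariance within each cell) and then arguing that both making each cell a centered disk and equalizing cell sizes cannot increase the mean absolute difference, appealing to a convexity argument in the space of distributions. If a fully rigorous treatment of this step proves long, the fallback is to invoke the informal symmetry argument already presented in the Fig.~\ref{fig:1} discussion, which the paper appears to treat as sufficient.
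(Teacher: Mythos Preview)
Your approach is more rigorous than the paper's and goes well beyond it. The paper's proof simply \emph{asserts} that the extremal configuration is a tiling by non-overlapping disks with a node at each center, then computes $\overline{d}=2r/3$ and $E[|x-y|]=4r/15$ for a single disk to read off the two values. There is no Voronoi decomposition, no bathtub argument, and no Jensen step; the paper treats ``the lower bound is achieved in the ideal circle-packing scenario'' as self-evident. So your $D(G,R)$ argument (per-cell disk lower bound via the bathtub principle, then convexity of $A\mapsto A^{3/2}$ to force equal cells) is a genuine strengthening: it actually proves the inequality rather than assuming the extremizer. Your candid assessment of the $U(G,R)$ optimality step is also accurate --- the paper does not justify it either, so your fallback to the informal argument matches exactly what the paper does.

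One concrete slip to fix: your disk computation of $E|X-Y|$ is off by a factor of two. With density $f(x)=2x/r^2$ on $[0,r]$,
\[
E|X-Y|=\frac{8}{r^4}\int_0^r x\!\int_0^x (x-y)\,y\,dy\,dx
=\frac{8}{r^4}\int_0^r \frac{x^4}{6}\,dx=\frac{4r}{15},
\]
not $8r/15$; and $2/5=0.4$, not $0.2$. With the correct numerator $4r/15$ you get $U=\dfrac{4r/15}{2\cdot 2r/3}=\dfrac{1}{5}=0.2$, matching the paper.
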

\begin{proof}
The lower bounds of both $D(G,R)$ and $U(G,R)$ are achieved when the
sensing nodes are perfectly evenly distributed such that the region
of interest is completely covered by non-overlapping circular areas
of radius $r$ with a sensing node at the center of each circle. Since each circle is identical to all
others as far as the distances of all points to their nearest active
nodes are concerned, $D(G,R)$ and $U(G,R)$ for each circular region
are the same as $D(G,R)$ and $U(G,R)$ for the entire region.

The lower bound on $D(G,R)$ is the average
distance between the center of a unit circle and the points within the
circle. This is an easily-derived geometric result
\cite{StoSto1994}.

Focusing now on $U(G,R)$, note that the average
distance between a point and the center of the circle is given by:
\begin{equation}
\overline{d} = \int_0^r \frac{2\pi x}{\pi r^2} x dx = \frac{2r}{3}
\label{equ:averaged}
\end{equation}
Consider two random
points within such a circle at distances $x$ and $y$ from the center
of the circle.
\begin{equation}
E[ |x-y|] = \int_0^r \frac{2\pi x}{\pi r^2} \int_0^r \frac{2\pi
y}{\pi
  r^2} |x-y| dy dx = \frac{4r}{15}
\label{equ:expectedD}
\end{equation}
Using Eqns.\,\,(\ref{equ:giniIndex}) and (\ref{equ:averaged}):
\[
U(G,R) = \left( \frac{1}{2} \right) \left( \frac{4r/15}{2r/3}
\right) = \frac{1}{5}
\]
\end{proof}

The bounds derived above are achieved in a scenario where the
region of interest can be perfectly covered by non-overlapping
circles. The simplest case would be a circular area with one sensor
node placed at the center. While possible, this is an unlikely
scenario for regions covered by a sensor network and therefore, in 
Appendix~\ref{sec:uniform_bounds}, we consider regions of interest
of arbitrary shape and prove upper bounds on the lower bounds of
$D(G,R)$ and $U(G,R)$. Theorem~\ref{theorm:achievablegini} in
Appendix~\ref{sec:uniform_bounds} shows that the lower bounds for
regions of arbitrary shape are only slightly higher than those for the
ideal case proved above.

\section{Related Work}\label{relatedWork}
The problem of designating the mode of a sensor
node as either active or sleeping is related (though not identical)
to the 2-color instance of some versions of the distributed graph
coloring problem \cite{Wes2001,OdoSet2004}, in which each node takes
on one of two colors with the goal to minimize the number of
neighbors of the same color as itself. While the algorithms in this
body of work will generally improve the spatial uniformity of active
nodes, they do not consider the distances between the nodes in their
computations and therefore, are limited in their application to the
problem under consideration. We show this later in
Section~\ref{results} by simulating the {\em Flip} algorithm, an
adaptation of the algorithm in \cite{Vaz2001}, in which each node
begins with randomly assigning itself one of two modes, and then, at
random intervals of time, switches to the mode that best
approximates the active node ratio in its neighborhood.

Spatially uniform distribution based on distances is more explicitly
considered in another body of work related to the problem of
facility location \cite{DreHam2004,OweDas1998}. The problem involves
determination of the locations of facilities in an environment (such
as emergency services in a city) given some constraints and an
objective function. In the field of networking, related problems
have been solved in the context of content distribution networks
where one has to replicate resources in multiple locations (servers
on a network) to boost performance by minimizing delay from users to
the nearest resource or by achieving load balancing on the network
\cite{QiuPad2001,LiGol1999,KimYoo2004,KanRob2002}. A variety of
techniques, including graph-theoretic approaches, heuristic
algorithms and dynamic programming, have been employed in these
works to arrive at a solution. Ko and Rubenstein developed the first
distributed algorithm for the placement of replicated resources,
best described as a solution to the distributed graph coloring
problem, by having each node continually change its color in a
greedy manner to maximize its own distance to a node of the same
color \cite{KoRub2005}. This work, which considers the distance
between two nodes as that along the communication path and not
as the geographical distance, cannot be directly applied to the
problem considered in this report. A further reason this body of work
does not directly apply here is that they only consider the
relationships between nodes and not between the nodes and the points
in the region of interest. Another set of works consider a set of
points in the region of interest as the targets, where the goal is
to cover and monitor each target point, as is discussed in
\cite{CarDu2005}. This set of work also does not serve the purpose
of achieving good quality of representation because, in our case,
all points in the region are equally significant targets.

Points in the region of interest are most explicitly considered in
the set of works that propose coverage algorithms for sensor
networks based on assuming a sensing area for each node in
area-sensing applications
\cite{MegFar2001,HuaTse2005,CarGra2006,LiWan2003,GalCar2008,HuaLo2006,ShaSri2003}.
The goal is usually to ensure that each point in the region of
interest is within the sensing area of at least $k$ active sensor
nodes. Distributed algorithms to achieve $k$-coverage do not
much improve the quality of representation although they do not
specifically attempt it either. In Section~\ref{results}, we
will compare the quality of sensor node representation and the
lifetime of a representative member of this class of algorithms with
the one proposed in this report.

\section{The EvenRep($\cal{F}$,\,$L$) Algorithm}\label{sec:algorithm}
The design of a distributed algorithm for the problem stated in
Section~\ref{sec:ProblemStatement} requires that a node make an
estimate of the quality of representation in its local area in
comparison to the desired spatial density to reliably determine if
it should sleep or go active. Note that even though the quality of
representation is about the distances of points in the region to the
nodes, only nodes and not the points can participate in this
distributed algorithm and so, we have to use heuristics based on
distances between nodes to achieve an improved quality of
representation for the points. A node, therefore, needs to know the
expected distances to the nearest active neighbors in a target
distance function and compare these against the actual distances.
Let $\cal{F}$ denote a {\em target distance function} which
specifies a mapping between $k \geq 1$ and the target distance
between an active node and its $k$-th nearest active neighbor. The
origin of a target distance function, say $\cal{F}$, may be the
expected distances between neighboring nodes in a given spatial
distribution $\cal{S}$, but targeting $\cal{F}$ in the algorithm is
not necessarily the same as targeting $\cal{S}$. For the same
reason, $\cal{F}$ is not necessarily a mapping between $k$ and the
expected distance between a node and its $k$-th nearest active
neighbor in the spatial distribution $\cal{S}$.
\begin{figure}[!t]
\begin{center}
\includegraphics[width=1.5in]{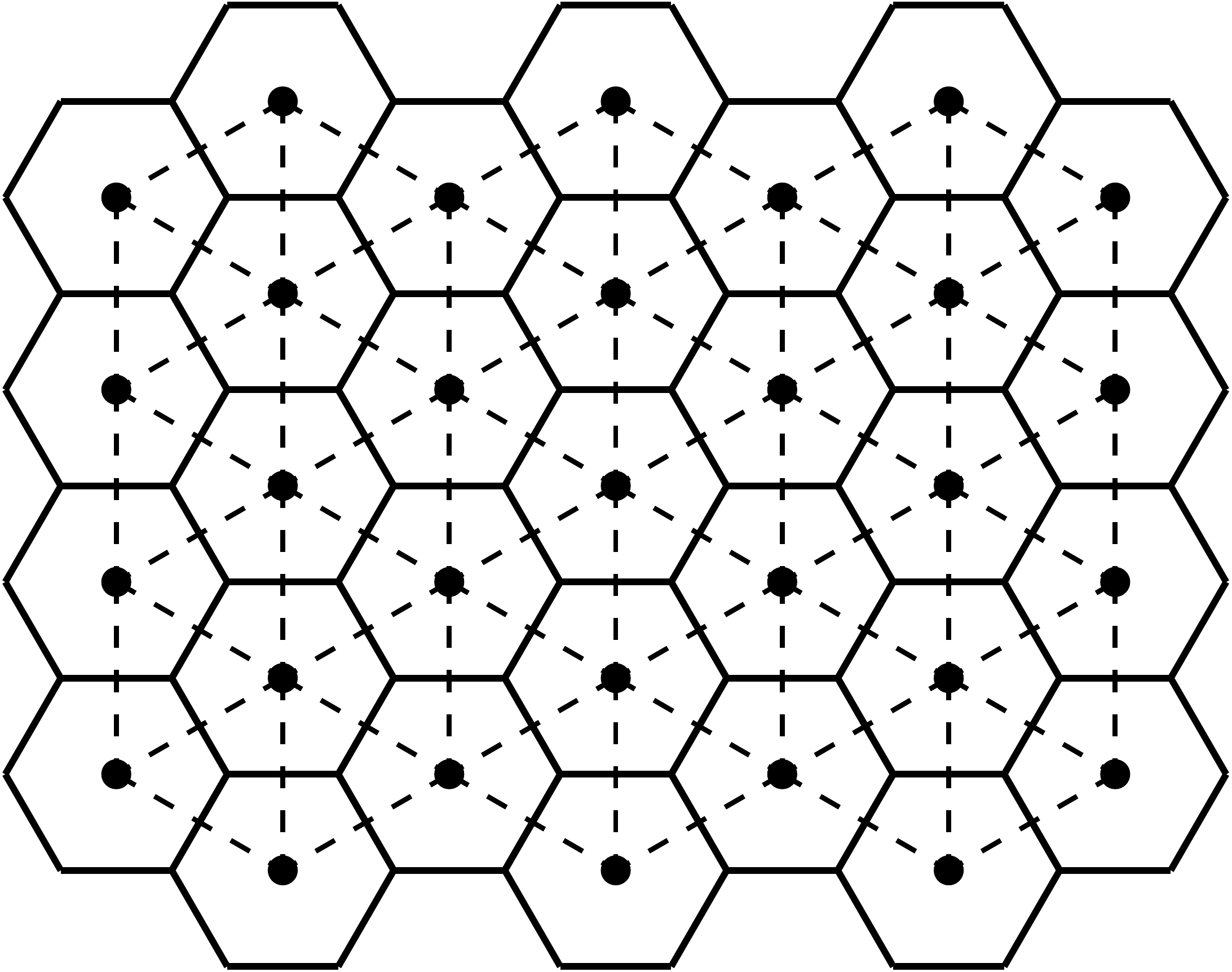}
\end{center}
\caption{The ideal sensor node layout which yields the target
  distance function, $H$.} \label{fig:model_uniform}
\end{figure}

The target distance function, $\cal{F}$, may depend on the
environment and on the application. In our preliminary work, the EvenCover
algorithm \cite{ChuSet2009}, we used a 2-dimensional plane with a
non-ideal target distance function derived from a Poisson point process. In
this work, we recognize that the Poisson distribution does not offer
an ideal target distance function for improving the quality of
representation. Given a 2-dimensional planar
region of interest, it is known that the best representation is
achieved when the region is tessellated in a space-filling manner by
hexagonal cells with a sensor node placed at the centroid of each
cell \cite{PacAga1995,BraMos2005}. This ideal layout is shown in
Fig.\,\,\ref{fig:model_uniform} and we denote by $H$ the target
distance function based on this layout. For a 3-dimensional region of
interest, the ideal layout will be different and likely based on one
of the space-filling tessellations of 3-dimensional space discussed
in \cite{AlaHaa2006}. We propose a new algorithm which accepts
any arbitrary target distance function, $\cal{F}$. Further, the algorithm
presented here, EvenRep($\cal{F}$,\,$L$), allows a limit, $L$, on the
number of active neighbors that a node will consider in its decision
making process. The choice of $L$ in our implementation of the
algorithm is discussed in Section~\ref{EvenCoverH}.

\subsection{Rationale and pseudo-code}
\label{evenCover}

\begin{figure}[!t]
\begin{center}
\framebox[10cm]{
\parbox{1.0cm}{
{\small
\begin{tabbing}
ww \= ww \= ww \= ww \= ww \= ww \= ww \= ww \= ww \kill
\mbox{~}\\
\mbox{~}01: \> {\em Initialization}:\\
\mbox{~}02: \> \> Turn on sense mode with probability $C_z$.\\~\\
\mbox{~}03: \> {\em EvenRep($\cal{F}$,\,$L$) Algorithm (executes in a loop)}:\\
\mbox{~}04: \> \> {\bf do}:\\
\mbox{~}05: \> \> \> {\bf if} node is active:\\
\mbox{~}06: \> \> \> \> $Q \leftarrow 1$\\
\mbox{~}07: \> \> \> {\bf else}:\\
\mbox{~}08: \> \> \> \> $Q \leftarrow 0$\\
\mbox{~}09: \> \> \> Wait for a random length of time between $0$ and $T$\\
\mbox{~}10: \> \> \> $K \leftarrow \min$( number of active neighbors,
$L$ )\\
\mbox{~}11: \> \> \> Compile list of $K$ nearest active neighbors\\
\mbox{~}12: \> \> \> {\bf for} $1 \leq k \leq K$:\\
\mbox{~}13: \> \> \> \> $X_k \leftarrow $ distance to $k$-th nearest active neighbor\\
\mbox{~}14: \> \> \> \> $Q \leftarrow Q + ~T_k( \cal{F} )$$/X_{k}$\\
\mbox{~}15: \> \> \> {\bf if} $( Q \geq z\pi X_{K}^2 )$:\\
\mbox{~}16: \> \> \> \> {\bf if} $( Q -z\pi X_{K}^2 \geq 0.5 )$:\\
\mbox{~}17: \> \> \> \> \> Set node to sleep mode\\
\mbox{~}18: \> \> \> \> {\bf else}:\\
\mbox{~}19: \> \> \> \> \> Set node to sleep mode with\\
\> \> \> \> \> \> probability $( Q - z\pi X_{K}^2 )$ \\
\mbox{~}20: \> \> \> {\bf else}:\\
\mbox{~}21: \> \> \> \> {\bf if} $( z\pi X_{K}^2 - Q \geq 0.5 )$:\\
\mbox{~}22: \> \> \> \> \> Set node to active mode\\
\mbox{~}23: \> \> \> \> {\bf else}:\\
\mbox{~}24: \> \> \> \> \> Set node to active mode with\\
\> \> \> \> \> \> probability $(z\pi X_{K}^2 - Q)$ \\
\mbox{~}25: \> \> {\bf while} true\\
\end{tabbing}
} } }
\end{center}
\caption{The EvenRep($\cal{F}$,\,$L$) algorithm executed at each
node.} \label{evenCoverCode}
\end{figure}

As before, let $N$ denote the total number of nodes in the region,
$A$ the area of the region and $z$ the desired spatial density.  Let
$C$ denote the {\em active node ratio}, the fraction of nodes in the
region of interest that are active. Note that $z$ is a property of
the application and not of the sensor network used by the
application, while $C$ describes the state of the sensor network.
Let $C_z = zA/N$ denote the desired active node ratio. Denote by
$T_k(\cal{F})$ the distance between the current node to its $k$-th
nearest active node in the target distance function
$\cal{F}$.

The pseudo-code for the algorithm executed at each node is shown in
Fig.\,\,\ref{evenCoverCode}. Each node can compute a quantity $Q$ at
the point where it is located based on a comparison between the
actual distances to its nearest neighbors and their target values.
Let $K$ denote the minimum of $L$ and the number of active neighbors within
the node's communication radius. Let $X_k$ denote the distance between
the node and its $k$-th nearest active neighbor. Then, in the case in
which the target distances are exactly achieved, $Q$ should equal the
actual number of active neighbors within a radius of $X_K$. The expected number of
nodes within radius $X_K$ at the desired active node ratio is $z\pi
X_K^2$. A node should stay in its current mode or switch to a
different mode
depending on whether or not the action taken helps bring the $Q$
computed by it closer to $z\pi
X_K^2$. For example, if the $Q$ computed is lower than that implied in the
active node ratio, the node should go into the sense mode if not
already in the sense mode.

Let $Q_i$ denote the $Q$ computed by
node $i$. The algorithm computes $Q_i$ as:
\begin{equation}
 Q_i = \delta_i + \sum_{k=1}^{K_i}
\frac{T_k(\cal{F})}{X_{k,i}}
\end{equation}
where $X_{k,i}$ is the distance from node $i$ to its $k$-th nearest
neighbor, $K_i$ is the minimum of $L$ and the number of active
neighbors of node $i$, and $\delta_i$ is given
by:
\begin{eqnarray}
\delta_i & = & \left\{
\begin{array}{ll}
1, & \mbox{if node}~i~\mbox{is in sense (active) mode},\\
0, & \mbox{if node}~i~\mbox{is in sleep mode.}
\end{array}
\right.
\end{eqnarray}
Let $r_i$ denote the distance $X_{K,i}$. If $Q_i$ computed as above
exceeds $z\pi r_i^2$ by $0.5$ or more and the node is in active mode,
turning it to the sleep mode will bring the local active node ratio
closer to that corresponding to the desired spatial density. Note that
when a node goes from active to sleep mode, the number of active nodes
in the local region of radius $r_i$ reduces by $1$ and, therefore,
comparing the difference between $Q_i$ and $z \pi r_i^2$ against
$0.5$ allows the node to best decide if it should sense or sleep so
that $Q_i$ is as close to $z \pi r_i^2$ as possible after it makes
the decision. Similarly, if $z\pi r_i^2$ exceeds $Q_i$ by $0.5$ or
more and node $i$ is in sleep mode, turning it to the active mode
will also bring the local active node ratio closer to that
corresponding to the desired spatial density. If $Q_i$ exceeds
$z\pi r_i^2$ by less than $0.5$, the algorithm sets the node to
sleep mode with probability equal to $Q_i - z\pi r_i^2$ (this does
not necessarily bring the local spatial density closer to the
desired value but is an attempt to bring the overall active node
ratio of the network closer to that corresponding to the desired
spatial density). Similarly, if $z\pi r_i^2$ exceeds $Q_i$ by less
than $0.5$, the node is set to active mode with probability $z\pi
r_i^2 - Q_i$.

\subsection{Complexity analysis}

We now examine the computational, communication and memory complexity
of the EvenRep($\cal{F}$,\,$L$) algorithm separated from any topology
control algorithm running at each node and entrusted with maintaining
a list of active neighbors within the node's communication
radius. Note that the execution of line 11 has computational
complexity $O(1)$ because each sensor node has to compile $K \leq L$
active neighbors where $L$ is a small constant. During the execution
of lines 12-14, the node updates $Q$ based on the distances between
itself and up to $K$ neighbors, each of which takes $O(1)$
time. Again, since $K \leq L$, a small constant, this too adds a
complexity of only $O(1)$. The rest of the algorithm is also $O(1)$
and therefore, the computational complexity of the algorithm at each
node is $O( 1 )$. The communication complexity of
EvenRep($\cal{F}$,\,$L$) is also $O(1)$, assuming again that a
separate and independent topology control algorithm maintains a list
of active neighbors. The memory usage of the algorithm is in the order
of $O(1)$. 

Note that the topology control algorithm, running independently of
EvenRep($\cal{F}$,\,$L$), may have its own communication complexity
associated with each node having to broadcast its active state to its
neighbors, receiving an acknowledgement for it from each of its active
neighbors and also receiving such state from each of its active
neighbors. This communication complexity depends on the topology
control algorithm used (which can vary greatly depending on the
algorithm \cite{San2006,LiHou2005-1313,SetGer2010}) and also on method used to estimate the
distance to each active neighbor (such as whether it is based on
assuming GPS devices in the sensor nodes \cite{NicNat2003} or on
exchanging transmission and reception powers \cite{KriBie1997}).

\subsection{EvenRep($H,L$)}
\label{EvenCoverH}

In the following, we focus on $\cal{F}$ $= H$ as the target distance
function and compute $T_k(H)$ for use in the EvenRep($H,L$)
algorithm. In the perfect layout (shown in
Fig.\,\,\ref{fig:model_uniform}), upon which the target distance function $H$
is based, each active node has six equidistant active
neighbors. Assuming an arbitrary distribution of sensor 
nodes in a region with a given node density $z$, the expected radius
of a circular region that contains seven active nodes is $r =
\sqrt{7/(\pi z)}$. To achieve a correspondence to a perfectly uniform
distribution such as in Fig.\,\,\ref{fig:model_uniform}, the heuristic
EvenRep($H,L$) uses a target distance of $\sqrt{7/(\pi z)}$ from an
active node to each of its six active neighbors.

Coincidentally and conveniently, for most topology control algorithms,
the number of active neighbors is typically $6$ or smaller
\cite{San2006,LiHou2005-1313,SetGer2010}. Therefore, we provide here the target
distance function only for $k \leq 6$:
\begin{equation}
T_k(H)=\sqrt{\frac{7}{\pi z}},\  \mbox{if}~ k \leq
6\\\label{uniform}
\end{equation}
The EvenRep($H,L$) algorithm analyzed in this report uses the above
expression for the target distance function $H$, and with $L$ set
equal to 3. This choice of $L$ is based on our
finding that the fourth neighbor and beyond (i) have a diminishing
influence on the quality of representation within the local region of
the node, and (ii) are better and more effectively considered at other
nodes for which they are one of the three closest neighbors.

\section{Simulation Results}
\label{results}

Our simulation experiments use $1000$ sensor nodes located in a square
region of unit area with a spatial distribution given by a Poisson
point process (each point in the region is equally likely to have a
node). In our implementation of EvenRep($H,3$), we choose $T$ as equal
to 10 units of time (recall from line 9 of the pseudo-code in
Fig.\,\,\ref{evenCoverCode} that each node waits a random length of
time between $0$ and $T$ between making the sense/sleep
decisions). The desired spatial density, the corresponding active node
ratio and the communication radius used in the experiments are
described as we discuss each of the simulation experiments in the
following subsections.

Each data point reported in the figures in this section represents
an average of $200$ different simulation experiments (each using
a different initial layout of the nodes). Based on the method of batch
means to estimate confidence intervals, we have determined that the
95\% confidence interval is within $\pm 1$\% for each of the data
points reported in the graphs.

\begin{figure*}[!t]
\begin{center}
\subfloat[{Average representation error, $D(G,R)$, achieved by
EvenRep($H,3$) plotted against time from $0$ to $10T$ for different
values of the communication radius.}]{
        \label{fig:task2_distance_samedensity}
        \includegraphics[width=3.25in]{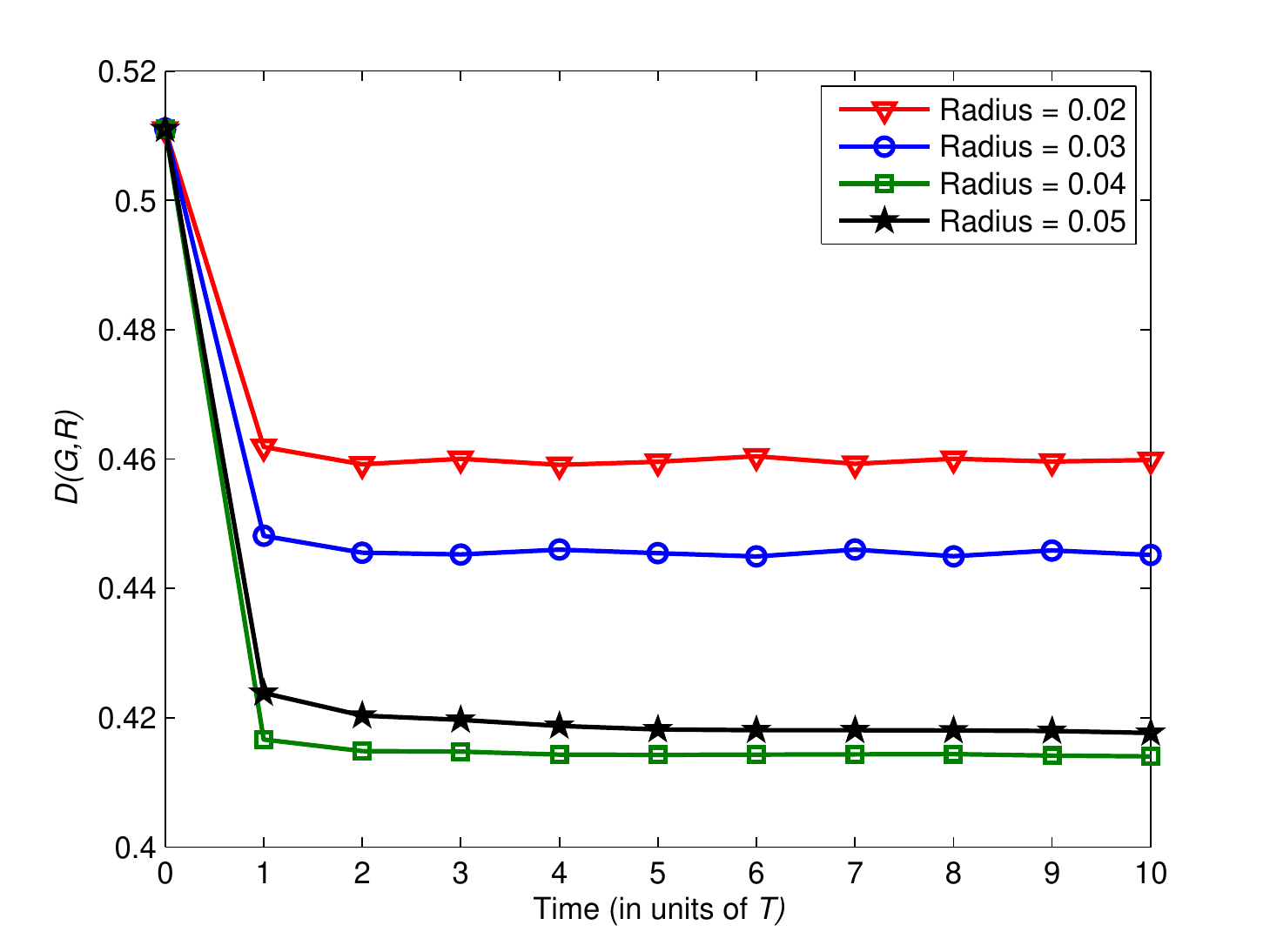}
        }
    \hskip0.2in 
\subfloat[{Unevenness of representation error, $U(G,R)$, achieved by
EvenRep($H,3$) plotted against time from $0$ to $10T$ for different
values of the communication radius.}]{
        \label{fig:task2_gini_samedensity}
        \includegraphics[width=3.25in]{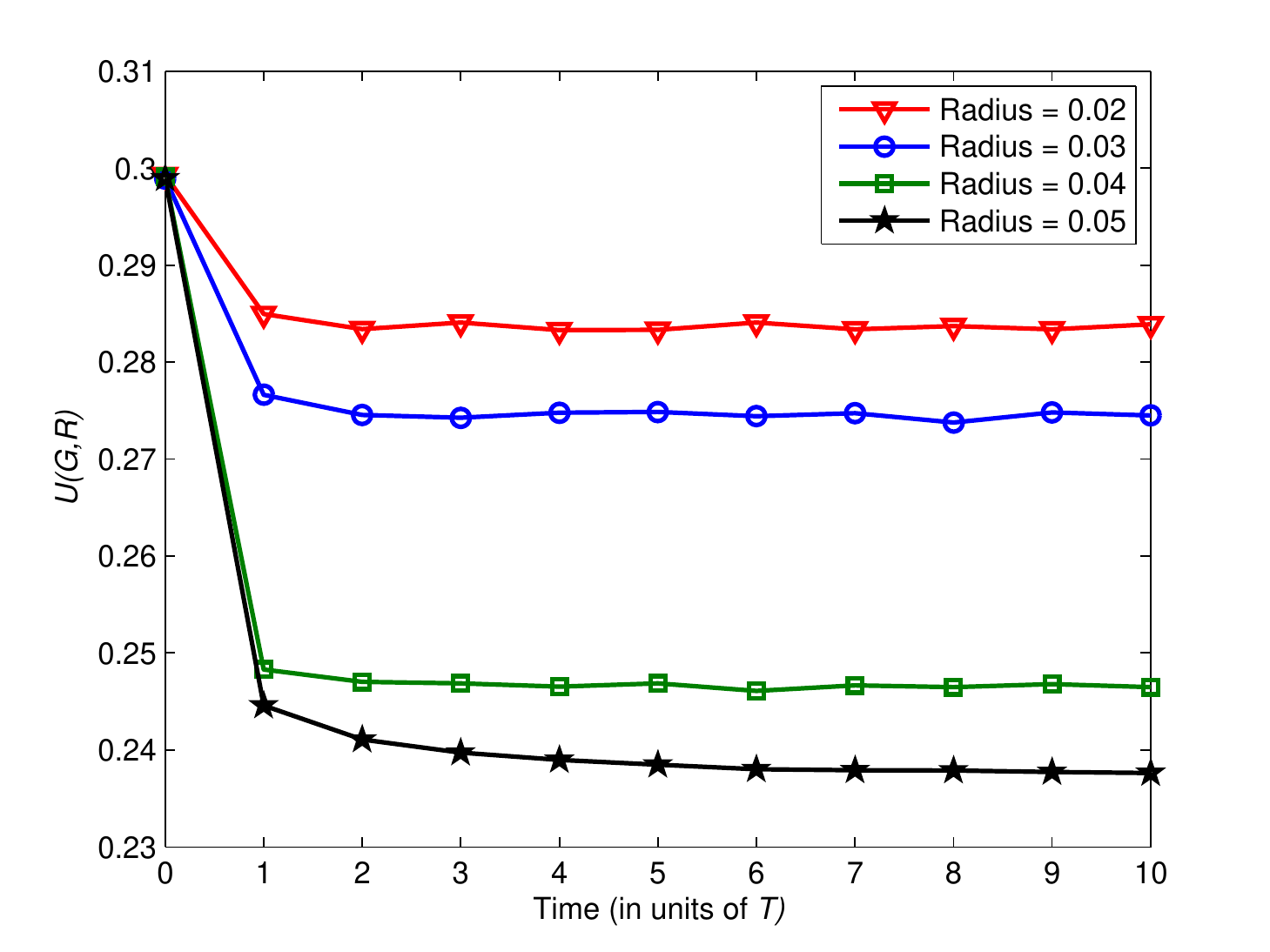}
        }
    \hskip0.2in 
\caption{Plots showing the convergence of EvenRep($H,3$) as the
algorithm executes and improves the quality of representation. In
these experiments, the desired spatial density used corresponds to
an active node ratio of $0.35$.}\label{fig:samedensity}
\end{center}
\end{figure*}

\begin{figure*}[!t]
\begin{center}
\subfloat[{Average representation error, $D(G,R)$, achieved by
EvenRep($H,3$) plotted against time from $0$ to $10T$ for different
values of the desired active node ratio. }]{
        \label{fig:task2_distance_sameradius}
        \includegraphics[width=3.25in]{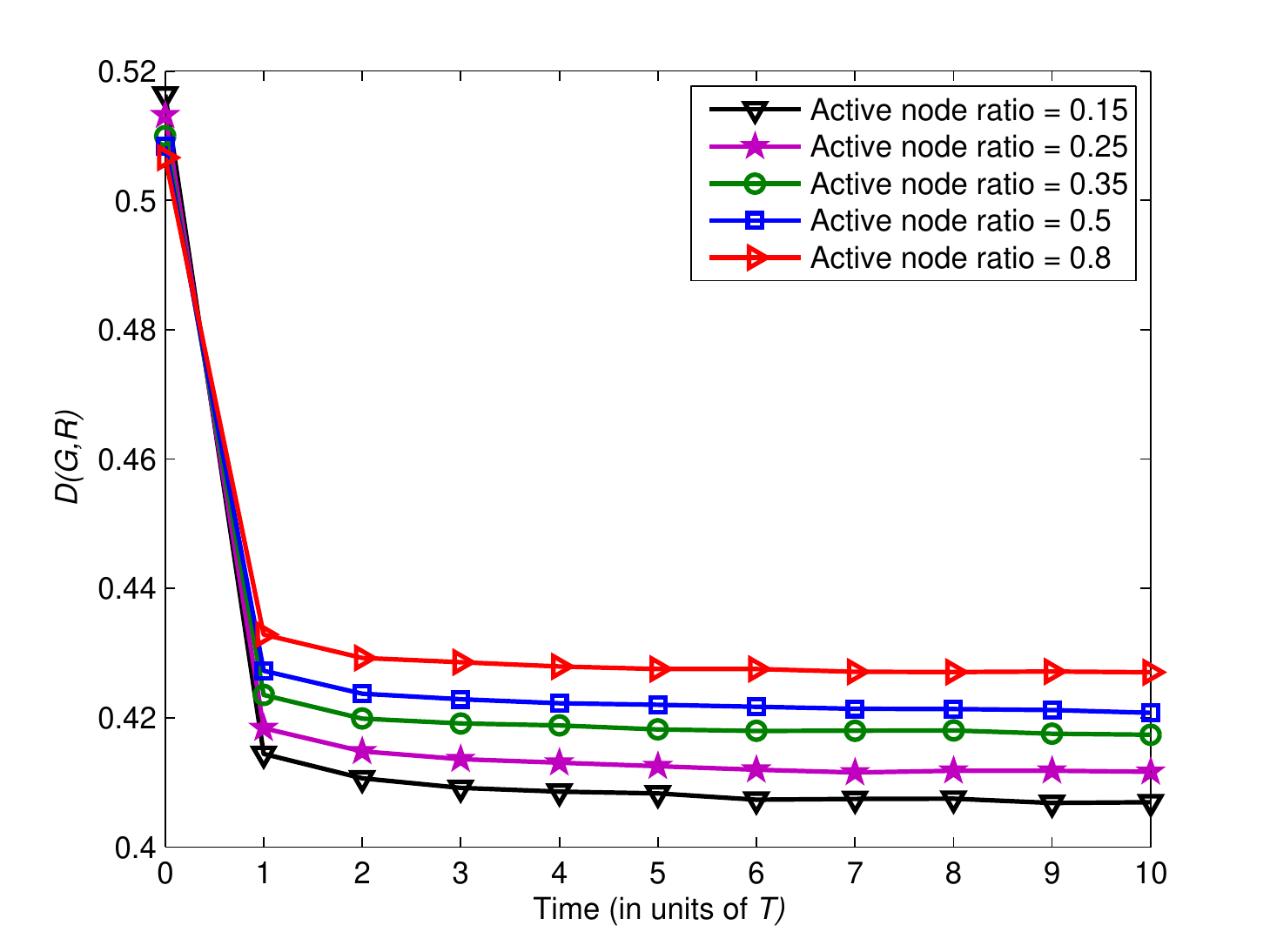}
        }
    \hskip0.2in
\subfloat[{Unevenness of representation error, $U(G,R)$, achieved by
EvenRep($H,3$) plotted against time from $0$ to $10T$ for different
values of the desired active node ratio.}]{
        \label{fig:task2_gini_sameradius}
        \includegraphics[width=3.25in]{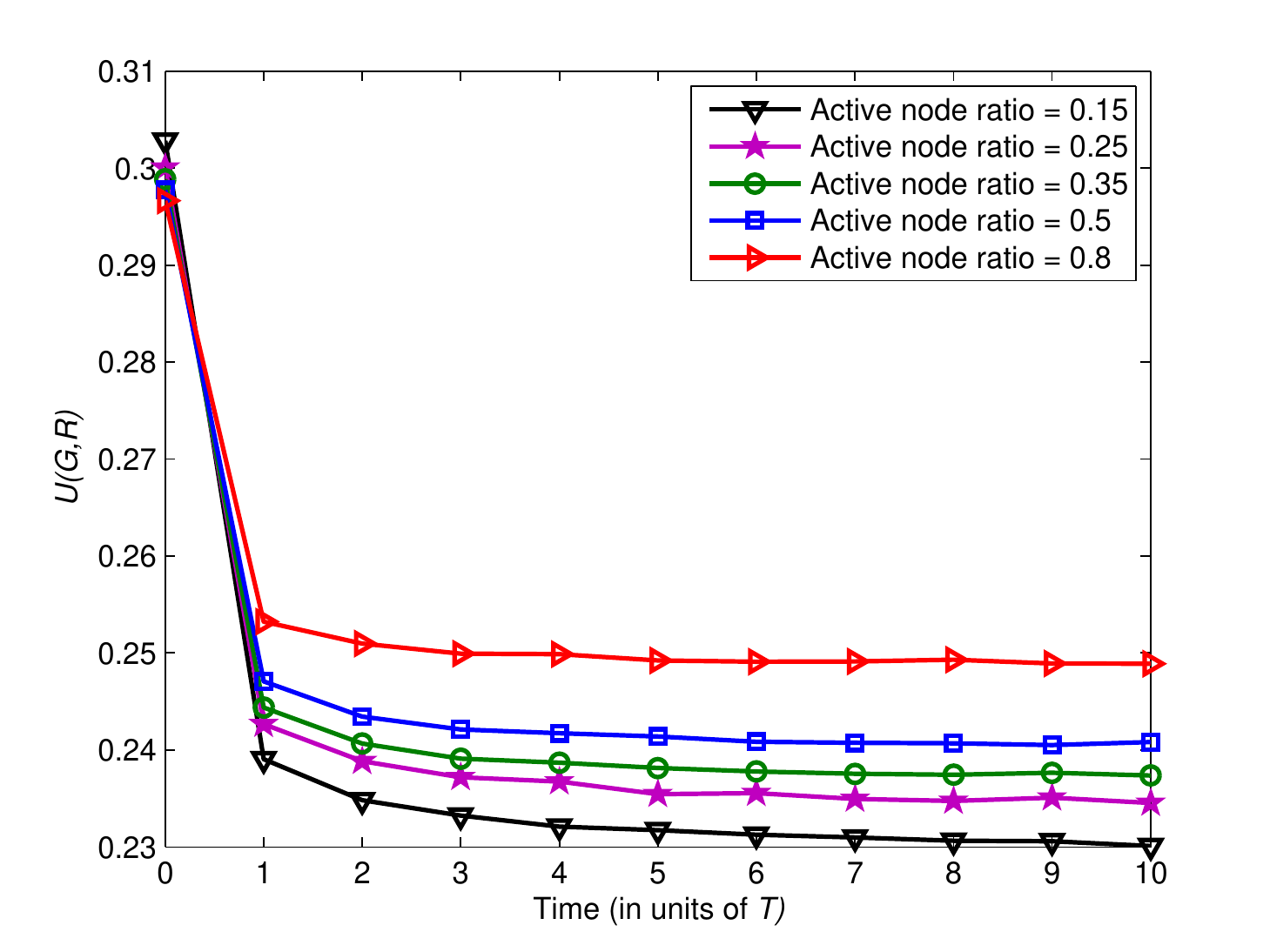}
        }
        \hskip0.2in
\caption{Plots showing the convergence of EvenRep($H,3$) as the
algorithm executes and improves the quality of representation. In
these experiments, the communication radius used is
0.08.}\label{fig:sameradius}
\end{center}
\end{figure*}

Our algorithm begins with each node randomly setting itself to
active mode with probability equal to $C_z$, the expected active
node ratio when the desired spatial density is $z$. Thus, the
spatial distribution of active nodes at the beginning of the
simulation is given by a finite Poisson point process. To understand
the reference point at which the simulation begins, we prove an
additional set of results in Appendix~\ref{sec:poisson} on the two
metrics. Theorem \ref{theorm:poisson_metric} in
Appendix~\ref{sec:poisson} proves that when the active nodes are
located in the region with a spatial distribution given by a Poisson
point process, the expected value of the average 
representation error, $D(G,R)$, is $0.5$ and the expected
value of the unevenness of representation error, $U(G,R)$, is
$1-1/\sqrt{2} \approx 0.293$. Due to the use of a {\em finite} Poisson point
process for the initial layout of the nodes in the unit area in our
simulation experiments, border
effects cause the initial values of $D(G,R)$ and $U(G,R)$ to be
slightly larger than $0.5$ and $1-1/\sqrt{2}$, respectively. 
Recall from Theorem
\ref{theorem:lowerBound} that a lower bound on $D(G,R)$ is
$2/(3\sqrt{\pi}) \approx 0.376$ and a lower bound on $U(G,R)$ is
$0.2$. Therefore, in our simulation experiments, one should expect
that $D(G,R)$ reduces to something between $0.5$ and $0.376$ and
that $U(G,R)$ reduces to something between $0.293$ and $0.2$ after a
certain length of time since the beginning of algorithm execution.

\subsection{Convergence}

\begin{figure*}[!t]
\begin{center}
\subfloat[{Average representation error, $D(G,R)$, achieved after
time $5T$ by different algorithms plotted against the communication radius.}]{
        \label{fig:task1_distance}
        \includegraphics[width=3.25in]{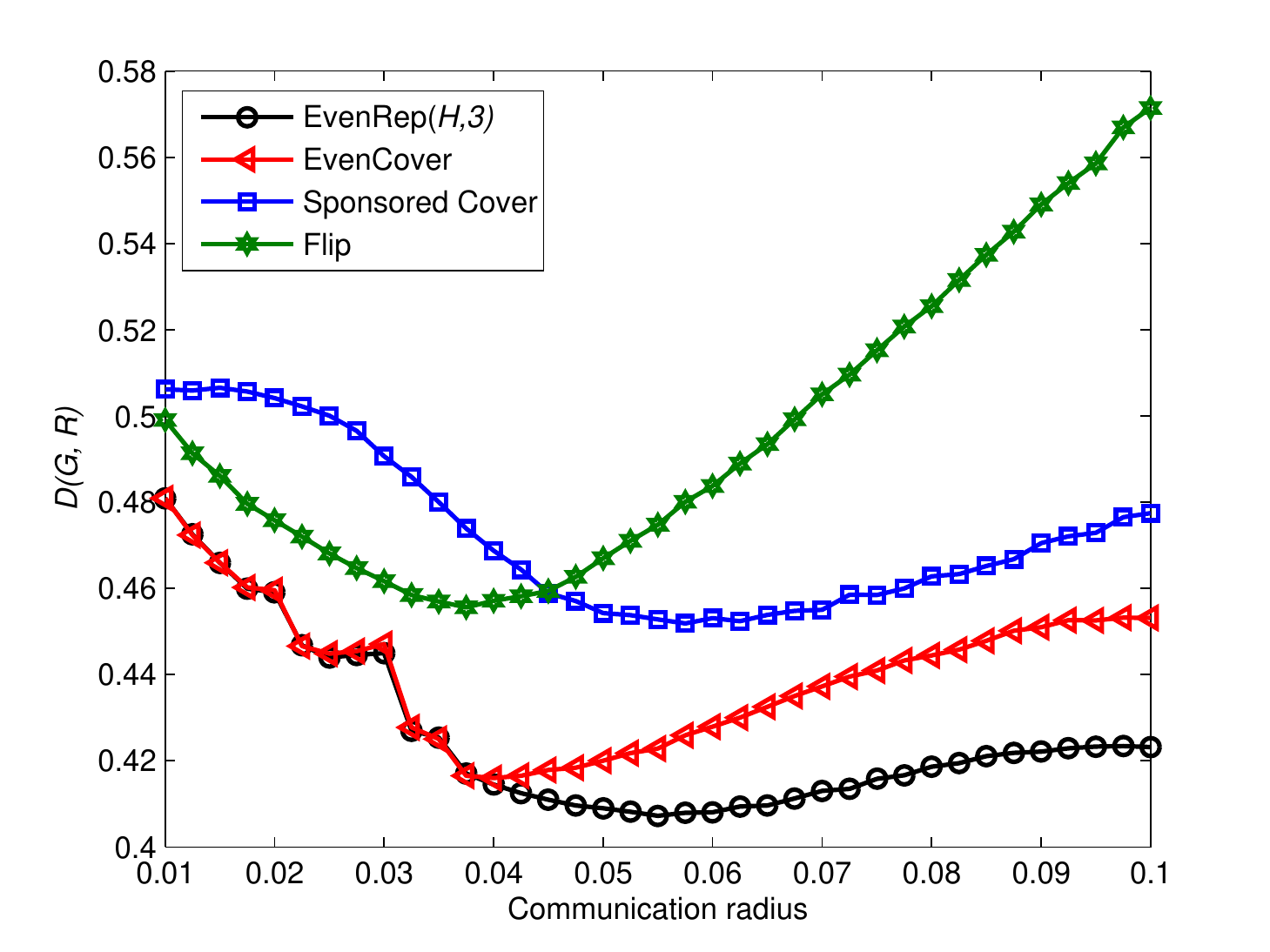}
        }
        \hskip0.2in
\subfloat[{Unevenness of representation error, $U(G,R)$, achieved
after time $5T$ by different algorithms plotted against the communication radius.}]{
        \label{fig:task1_gini}
        \includegraphics[width=3.25in]{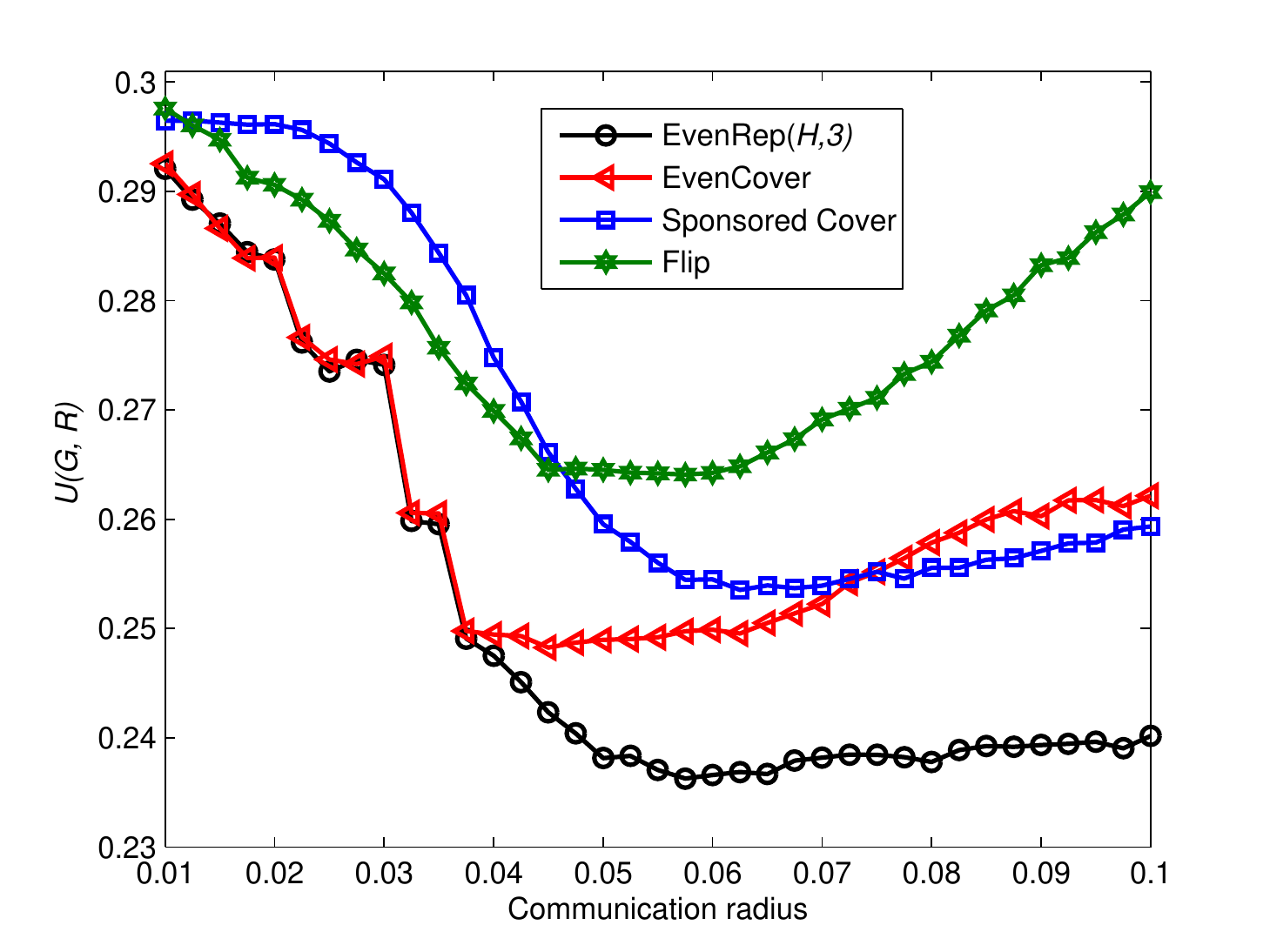}
        }
    \hskip0.2in 
\caption{Plots comparing the quality of representation achieved by
different algorithms.}
\end{center}
\end{figure*}

Fig.\,\,\ref{fig:samedensity} shows the convergence of the
EvenRep($H,3$) algorithm on the quality of representation metrics,
$D(G,R)$ and $U(G,R)$, for different values of the communication
radius while the desired spatial density corresponds to an active
node ratio equal to $0.35$ (i.e., given $1000$ nodes in the unit
area in our simulations, the desired spatial density, $z$,
corresponds to $350$ active nodes). Almost all topology control
algorithms achieve an average communication radius at each node
corresponding to six or fewer neighbors
\cite{LiHou2005-1313,SetGer2010}. Therefore, the largest communication
radius we use is $0.08$ units, which corresponds to approximately $6$
active neighbors within a node's communication radius.

Fig.\,\,\ref{fig:task2_distance_samedensity} plots the average
representation error as the algorithm continues to execute for a
length of time equal to $10T$.
Fig.\,\,\ref{fig:task2_gini_samedensity} shows the corresponding
convergence of the EvenRep($H,3$) algorithm on the unevenness of
representation error metric using the same set of parameters. Note that both the average and the
unevenness of representation error reduce rapidly as early as $T$
(about $2$ iterations of the loop between lines 04--25 in
Fig.\,\,\ref{evenCoverCode} because the expected length of time
between two iterations is $T/2$). It is not necessarily true that as
the communication radius increases, the average representation error,
$D(G,R)$, will reduce. This is because, when the communication radius is small,
in order to achieve the desired active node density more nodes than
necessary will determine that they should be active since they have limited
information about the status of other nodes in the region. This
reduces the average distance between active nodes, thus
reducing $D(G,R)$, but does not necessarily improve $U(G,R)$. As one
might expect, as the communication radius increases, each sensor
node has more neighbors and is able to collect significantly more
relevant information about the quality of representation in its
neighborhood, thus reducing the unevenness of representation error.

In our second set of experiments on the convergence properties of
EvenRep($H,3$), we keep the communication radius constant and vary
the desired active node ratio from 0.15 to 0.8. We use a communication
radius of 0.08 units corresponding to an average of about six active
neighbors within the radius when the active node ratio is 0.35 (the
density used in the previous set of experiments). The result is
plotted in Fig.\,\,\ref{fig:task2_distance_sameradius} and
Fig.\,\,\ref{fig:task2_gini_sameradius} for an interval of time up
to $10T$. Once again, the algorithm appears to converge rapidly
within time $T$, which corresponds to approximately $2$ executions
of the algorithm for each sensor node. Note that the algorithm's
performance does not increase with the increase of the active node
ratio. In this set of experiments, the algorithm achieves its best
performance when the active node ratio is 0.15, the smallest ratio
used in the experiments. This is expected because, when the active node
ratio is high, the algorithm has fewer choices in determining the
active node layout and becomes more confined to the original Poisson
distribution of the nodes; on the other hand, when the active node
ratio is low, the algorithm has more choices in determining which
nodes should sense and which should sleep, leading to an improved
quality of representation. A theoretical proof of the convergence of
the algorithm remains an open problem.

It should be noted that, while fast convergence to low values of these
metrics is desirable, convergence to one particular layout of the
active nodes is not desirable. This is because the lifetime of a
network suffers if nodes, once chosen to be active, remain active
forever. 

\subsection{Comparative analysis}
\label{sec:comparison}

\begin{figure*}[!t]
\begin{center}
\subfloat[{The fraction of nodes alive as the execution of
an algorithm proceeds.}]{
        \label{fig:task3_lifetime}
        \includegraphics[width=3.25in]{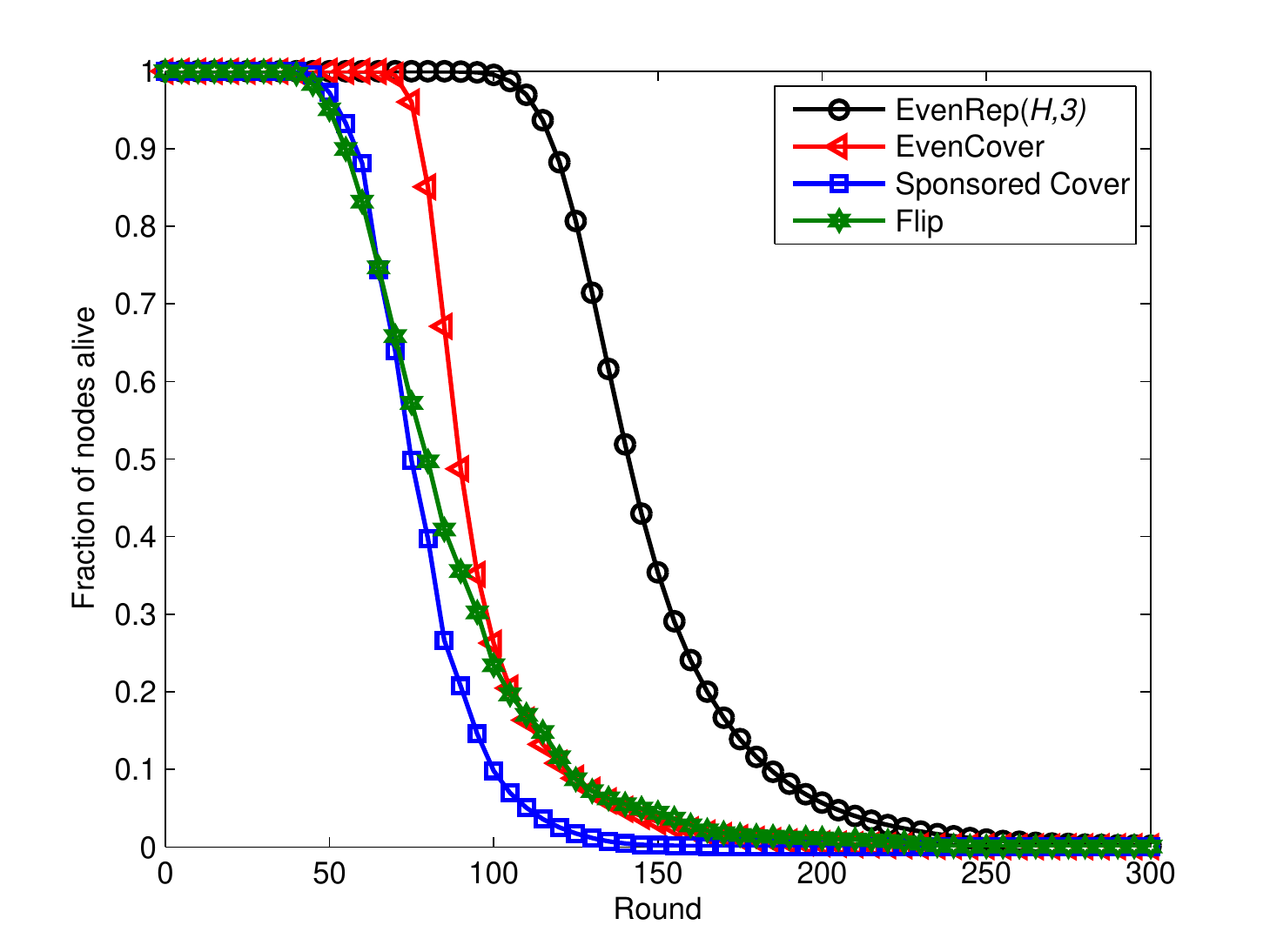}
        }
    \hskip0.2in 
\subfloat[{The quality of representation as the execution of an
algorithm proceeds.}]{
        \label{fig:task3_gini}
        \includegraphics[width=3.25in]{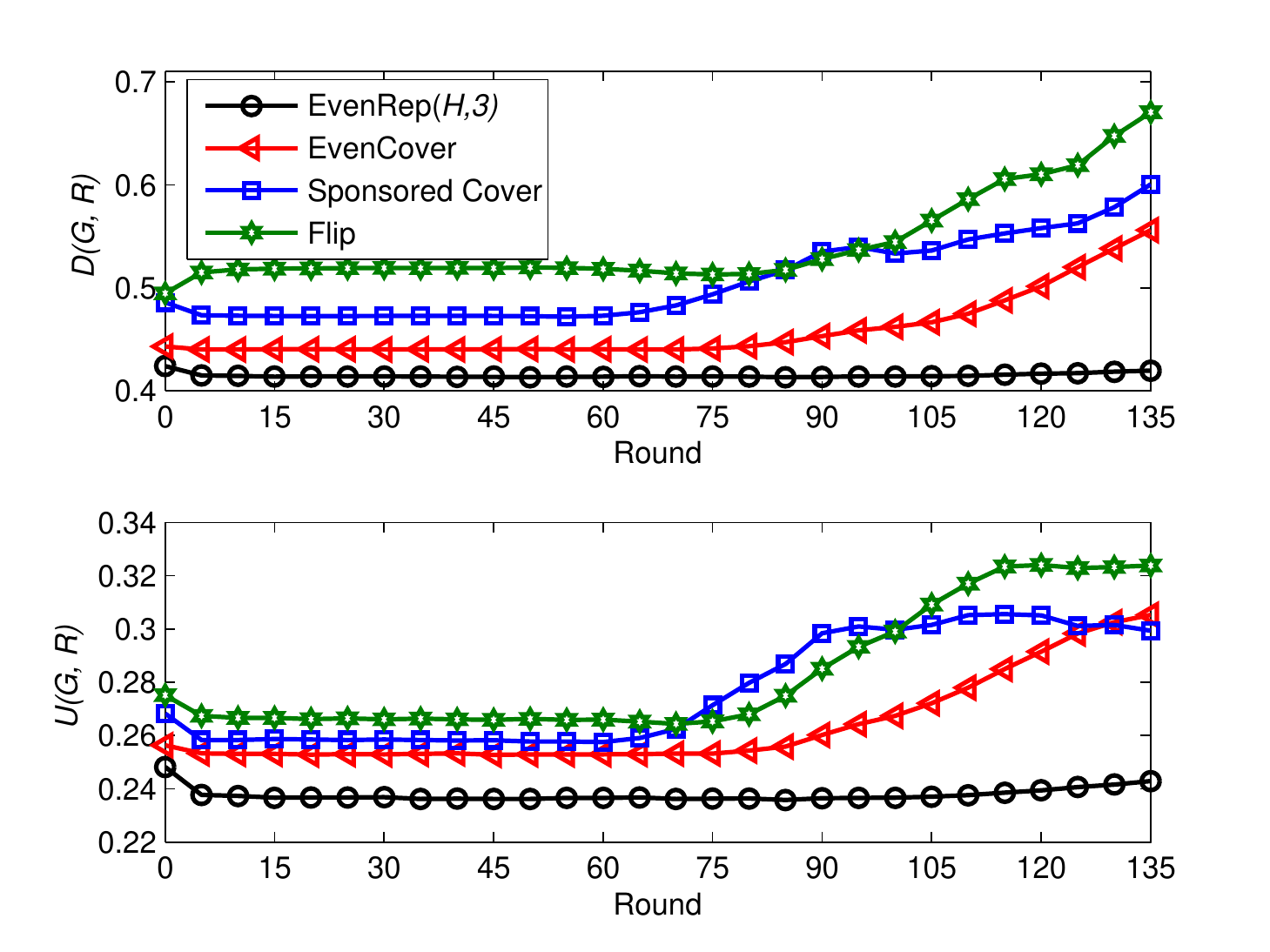}
        }
        \hskip0.2in
\caption{Plots showing the network lifetime (the fraction of nodes
  alive is used as in indication of network lifetime) and the
  degradation in the quality of representation as nodes die. The
  communication radius used is $0.08$ units.}
\end{center}
\end{figure*}

We report results for the following distributed algorithms:
\begin{itemize}
\item The EvenRep($H,3$) algorithm: The EvenRep($\cal{F}$,\,$L$) algorithm
  using a target distance function $\cal{F}$ $= H$ (based on a
  node placement in which the region is covered by non-overlapping
  hexagonal cells with each sensor node covering one cell) and
  $L=3$. The value of $T_k(H)$ used in the algorithm implementation is
  discussed in Section~\ref{EvenCoverH} (see Eqn.\,\,(\ref{uniform})).
\item The EvenCover algorithm: This is our preliminary work \cite{ChuSet2009} that
employs a target distance function derived from a distribution of
sensor nodes given by a Poisson point process.
\item Sponsored Cover: For a representative coverage protocol that
  assumes an area-sensing application, we choose the well-cited
  coverage-preserving node-scheduling scheme based on sponsored
  coverage calculations \cite{TiaGeo2002}. In this protocol,
  a node decides to go into the sleep mode if its entire designated
  sensing area is also covered by its neighbors. To avoid situations
  in which each of two neighbors expects a certain spot to be covered
  by the other, the protocol implements a random time for which each
  node delays its decision.
\item Flip: In this protocol, each node counts the fraction of its
  neighbors (including itself) that are active and sets itself into
  either the sleep mode or the active mode depending on whether or not
  this fraction is larger or smaller than the desired active node
  ratio. We use this algorithm as representative of coverage strategies
  based on distributed algorithms for graph coloring that do not use
  the distance between the nodes in their computations \cite{OdoSet2004}.
\end{itemize}

Figs.\,\,\ref{fig:task1_distance} and \ref{fig:task1_gini} plot the performance of these
algorithms against the communication radius after the algorithms execute for a
period of time $5T$. The Sponsored Cover algorithm takes the
communication radius as the sole input while the other three also
require the desired active node ratio as an input (chosen as $0.35$ in
these figures). The figures
show that EvenRep($H,3$) achieves the best quality of
representation among the algorithms studied. Given that the target
distance function $H$ is based on a more spatially uniform
distribution of nodes than a Poisson point process (upon which the
EvenCover algorithm is based), one expects the EvenRep($H,3$)
algorithm to perform better than EvenCover (as is also observed in
these results). The Sponsored Cover algorithm seeks full, but not
necessarily even, coverage of points in the region. As a result, in
parts of the region with a denser cluster of nodes, the Sponsored
Cover algorithm will unnecessarily turn on larger numbers
of nodes resulting in poorer spatial uniformity. The Flip algorithm,
on the other hand, does not consider distances between nodes and
therefore, is far from being able to achieve a good quality of
representation.

\subsection{Network Lifetime}

In this section, we compare the network lifetime of EvenRep($H,3$)
with other algorithms. In our simulation experiments, we begin
with each node allocated a certain amount of energy which is
expended in the following two ways:
\begin{itemize}
\item On/off broadcast transmission, used to broadcast the node's
new status (on/off) to its neighbors.
\item Reception, for receiving data and control information from
  neighbors.
\end{itemize}
The power
consumption model is adapted from \cite{TiaGeo2002}, with the
assumption that each node is allocated 0.05J of energy and the data
signal is a 2000-bit report message. The transmission energy
consumption and the reception energy consumption is calculated as follows:
\begin{eqnarray}
E_{T_x}(d) & = & E_{elec}\times k+\varepsilon_{friss\mbox{-}amp}\times k\times d^2 \\
E_{R_x} & =  &E_{elec}\times k
\end{eqnarray}
where $E_{T_x}(d)$ is the energy consumed in transmitting the signal
to an area of radius $d$, $E_{elec}$ is the energy consumed for the
radio electronics, $\varepsilon_{friss\mbox{-}amp}$ is for the power
amplifier and $E_{R_x}$ is the energy consumed in receiving the
signal. Radio parameters are set as $E_{elec}=50 nJ/$bit and
$\varepsilon_{friss\mbox{-}amp}=10 pJ/$bit$/m^2$. The energy
consumed in sensing is not considered in this set of simulations,
exactly as in the model used in \cite{TiaGeo2002}.

The EvenRep($\cal{F}$,\,$L$) algorithm does not use the concept of
{\em rounds} while Sponsored Cover and Flip do. In
order to present a fair comparison, therefore, we use a
slightly modified version of our algorithm so that each node is
scheduled to work in rounds. At the beginning of each round, each
node randomly picks a start time, $t$, between $0$ and $T$, and
makes its decision based on the information received so far. At time
$t$ after the start of the round, it switches its state according to
the decision made. If the node chooses to turn on, then it will
broadcast its decision to its neighbors. Each of its neighbors will
receive the decision and remember it for its own reference. If the
node chooses to turn off, it will not make any broadcast
attempts. A node is considered dead if it has consumed all the power
allocated and alive, otherwise. Once a node is dead, it cannot be
turned on again nor can it broadcast or receive signals. 

A fair lifetime comparison can only be achieved when both the communication radius and the active
node ratio are the same for each of the four listed algorithms. Recall
that the Sponsored Cover algorithm does not require the active node
ratio as an input. However, since the active node ratio achieved by
the Sponsored Cover algorithm is a function of the communication
radius employed, it strictly limits the choice of the desired active
node ratio we can use in our comparisons. In these experiments, we use
a communication radius of $0.08$ units for all four algorithms. Using
this communication radius, we first note the active node ratio
achieved by the Sponsored Cover algorithm and then use this ratio as
the input active node ratio for the other three algorithms.

Fig.\,\,\ref{fig:task3_lifetime} reports the fraction of nodes alive
as time progresses to indicate the network lifetime (for example,
one may define network lifetime as the time until 50\% of the nodes
are dead). Figs.\,\,\ref{fig:task3_lifetime} and
\ref{fig:task3_gini} demonstrate that the EvenRep($H,3$) algorithm
significantly improves the lifetime of the network while also
achieving a better quality of representation even as nodes die. Since the Sponsored Cover
algorithm seeks area-sensing coverage of points in the region, some
sensor nodes may have to constantly stay active while its neighbors
are constantly in sleep mode. The goal of full coverage, therefore,
contributes to a reduced lifetime while the goal of spatial
uniformity, appropriate for spot-sensing applications, results in an
improved lifetime. The combination of striving for spatial
uniformity and the use of random chance to place nodes in specific
modes (as in lines 19 and 24 in Fig.\,\,\ref{evenCoverCode}) leads
to the difference between the lifetime achieved by EvenRep($H,3$)
and that achieved by other algorithms. In addition to improved
lifetime, Fig.\,\,\ref{fig:task3_gini} shows that the EvenRep($H,3$)
algorithm better preserves the average and the evenness of
representation error as nodes in the network die, in comparison to
other algorithms. This indicates that EvenRep($H,3$) achieves a more
graceful degradation of the sensor network as the battery power in
the nodes are exhausted. The EvenCover algorithm, on the other hand,
does not achieve as good a lifetime. As can be observed from
Figs.\,\,\ref{fig:task1_distance} and \ref{fig:task1_gini}, the
EvenCover algorithm performs as well as EvenRep($H,3$) when the
communication radius is small (corresponding to 4 or fewer active
neighbors). However, a communication radius corresponding to an
average of 6 active neighbors is a more realistic scenario generated
by topology control algorithms and EvenRep($H,3$) performs
significantly better in this range of the communication radius.

\section{Conclusion}\label{conclusion}
Recent research literature has largely focused on area-sensing
applications where the goal is to get each point in the region
$k$-covered. In this report, we turn our attention to spot-sensing
applications and introduce a new problem with the goal of achieving
a good quality of representation by activating the sensor nodes in
such a way that active nodes are spatially uniformly distributed. To the best of the knowledge of the authors, this is the
first work that specifically targets quality of representation of
the points in the region for spot-sensing applications in sensor
networks. A
better quality of representation indicates a shorter normalized
average distance between the points in the region of interest to
their nearest active nodes, and an even distribution of these
distances. We have developed two complementary metrics to capture the
quality of representation achieved by a sensor node deployment and
used these in our evaluations of different algorithms.

We have developed a generalized distributed
algorithm called EvenRep($\cal{F}$,\,$L$), which accepts two
parameters: $\cal{F}$, a target distance function that maps $k \geq 1$
to the desired distance from a node to the $k$-th nearest active
neighbor and $L$, the maximum number of active neighbors that a node
will consider in its decision making. We implement an instance of this
algorithm using a target distance function, $H$, based on a spatial
distribution in which the 2-dimensional region is covered by
non-overlapping hexagonal cells with a sensor at the centroid of each
cell. The results show that EvenRep($H,3$) achieves a better quality
of representation and, very importantly, a longer network lifetime
than other related distributed algorithms.

Algorithms designed for area-sensing applications use the sensing
radius as the only input parameter to determine the sense/sleep
status of nodes. A given sensing radius implies a specific target
spatial density, and vice versa. If the target density is low, it
implies a large sensing radius and therefore, for most coverage
algorithms, a large communication radius and high energy costs.
Coverage algorithms designed for area-sensing applications,
therefore, cannot be adapted for spot-sensing applications,
especially at lower values of the desired density of active nodes.
The EvenRep($H,3$) algorithm, however, works well for spot-sensing
applications at all active node densities while also achieving a
longer lifetime. As a result, the EvenRep($H,3$) algorithm also
allows a graceful degradation of the network as nodes die because it
preserves the quality of representation at all active node
densities. Admittedly, EvenRep($H,3$) is a heuristic. Future work in
this direction should seek a strong theoretical foundation upon which
distributed algorithms for improved quality of representation may be based.

\appendices

\section{}\label{sec:uniform_bounds}

Given a region of interest, $R$, of arbitrary shape covered by a graph
of active sensor nodes, $G$, let $\mathrm{LB}(D)$ denote the lower
bound on $D(G,R)$ and let $\mathrm{LB}(U)$ denote the lower bound on
$U(G,R)$. The following theorem derives an upper bound on these lower
bounds.

\begin{theorem}
Given a region of interest of arbitrary shape covered by active sensor nodes,
\begin{eqnarray}
\mathrm{LB}(D) & \leq & \left(\frac{1}{9}+\frac{\ln 3}{12}\right)\sqrt{2\sqrt{3}}
\approx 0.3772 \\
\mathrm{LB}(U) & \leq & 0.2038
\end{eqnarray}
\label{theorm:achievablegini}
\end{theorem}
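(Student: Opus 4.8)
The plan is to prove both inequalities by exhibiting a single explicit family of deployments that does at least this well, so that the infimum defining $\mathrm{LB}(D)$ and $\mathrm{LB}(U)$ is bounded above accordingly. The candidate is the tessellation of the plane by congruent regular hexagons with an active node at the centroid of each cell --- the layout already identified (Fig.~\ref{fig:model_uniform}) as ideal for a $2$-dimensional region. First I would invoke the same reduction used in the proof of Theorem~\ref{theorem:lowerBound}: if a region is partitioned into congruent cells each having a node at its center, then the distribution of $d_p(n_p(G))$ over the whole region coincides with its distribution over a single cell, so $D(G,R)$ and $U(G,R)$ equal the corresponding quantities computed for one regular hexagonal cell $\mathcal{H}$ with a point at its center. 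The problem thus reduces to two purely geometric computations about $\mathcal{H}$, plus a limiting argument to handle boundary effects for arbitrary shapes.

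For the bound on $\mathrm{LB}(D)$: let $a$ be the apothem of $\mathcal{H}$, so its area is $2\sqrt{3}\,a^{2}$ and the matching density is $z = 1/(2\sqrt{3}\,a^{2})$. Using the $12$-fold symmetry of $\mathcal{H}$, write the mean center-distance as $\overline{d} = \frac{12}{\text{Area}}\int_{0}^{\pi/6}\!\int_{0}^{a/\cos\theta} r\cdot r\,dr\,d\theta$, where $r\cos\theta = a$ is the cell boundary on the relevant sector. The $r$-integral leaves $\overline{d} = \frac{4a^{3}}{\text{Area}}\int_{0}^{\pi/6}\sec^{3}\theta\,d\theta$, and the standard antiderivative of $\sec^{3}$ evaluated at $\pi/6$ gives $\int_{0}^{\pi/6}\sec^{3}\theta\,d\theta = \frac{1}{3} + \frac{1}{4}\ln 3$. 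Substituting $\text{Area} = 2\sqrt{3}\,a^{2}$ and simplifying, $D(G,R) = \overline{d}\,\sqrt{z} = \left(\frac{1}{9} + \frac{\ln 3}{12}\right)\sqrt{2\sqrt{3}} \approx 0.3772$, which is scale-invariant and gives the stated bound on $\mathrm{LB}(D)$.

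For the bound on $\mathrm{LB}(U)$: I would first determine the distribution of the center-distance $d$ of a uniform random point in $\mathcal{H}$. For $0 \le d \le a$ the circle of radius $d$ lies entirely inside $\mathcal{H}$, so the density of $d$ is proportional to $d$; for $a < d \le 2a/\sqrt{3}$ the circle protrudes through the six edges, and the arc remaining inside $\mathcal{H}$ has length $2\pi d - 12\,d\arccos(a/d)$, which gives the density on that range. With the resulting CDF $F$ in hand, compute $E\bigl[\,|d_p(n_p(G)) - d_q(n_q(G))|\,\bigr] = 2\int_{0}^{2a/\sqrt{3}} F(t)\bigl(1 - F(t)\bigr)\,dt$ and then $U(G,R) = E\bigl[\,|d_p - d_q|\,\bigr]/(2\overline{d})$; the $a$-dependence cancels and the value works out to $\approx 0.2038$. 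Finally I would dispense with the assumption that $\mathcal{H}$ tiles $R$ exactly: letting the cell size $a \to 0$ and keeping every cell that meets $R$ (node at each such centroid), every point of $R$ lies within $2a/\sqrt{3}$ of a node, the cells entirely inside $R$ contribute exactly the single-cell values, and the cells straddling $\partial R$ occupy an area fraction tending to $0$ (for any region with, say, rectifiable boundary) while their distance contributions are $O(a\sqrt{z}) = O(1)$ and wash out; taking the limit gives $D(G,R) \to D_{\mathcal{H}}$, $U(G,R) \to U_{\mathcal{H}}$, hence the bounds on $\mathrm{LB}(D)$ and $\mathrm{LB}(U)$.

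The main obstacle I anticipate is the $U$ computation: evaluating $\int F(1-F)$ through the regime where $F$ carries the $\arccos(a/d)$ term (and the companion $a\sqrt{d^{2}-a^{2}}$ area term) is the one genuinely messy calculation, and some care is also needed to make the boundary-effect limit rigorous when $\partial R$ is not smooth. Everything else is either a direct quotation of the cell-decomposition argument from Theorem~\ref{theorem:lowerBound} or an elementary integral.
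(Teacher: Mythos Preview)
Your proposal is correct and, for $D(G,R)$, essentially identical to the paper's argument: both reduce to a single hexagonal cell and integrate the center-distance in polar coordinates over one of the twelve congruent wedges, differing only in the choice of parameter (you use the apothem $a$ and the boundary $r = a/\cos\theta$, the paper uses $d = \sqrt{3}r/6$ and the boundary $\sqrt{3}d/\sin(\theta + \pi/3)$). For $U(G,R)$, however, your route genuinely diverges. You compute the CDF $F$ of the center-distance and invoke the identity $E[\,|d_p - d_q|\,] = 2\int F(1-F)$, which would yield the exact Gini value for the hexagon. The paper instead splits the wedge into an inner sector (Area~A) and an outer sliver (Area~B), handles the three cases (both points in~A, one in each, both in~B) separately, and for the third case does not compute the integral but merely bounds it by (probability)${}^2$ times the diameter of~B; summing gives $U(G,R) < 0.2038$ rather than an equality. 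Your approach is cleaner in principle and delivers the exact constant (which is strictly below $0.2038$, so the theorem as stated still follows), at the cost of the $\arccos$ integral you flag; the paper's case split is more elementary but sacrifices exactness. Your explicit $a \to 0$ limiting argument to absorb the boundary of an arbitrary region is also more careful than the paper's one-line appeal to ``an infinite number of hexagonal tiles, each of infinitesimal size.''
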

\begin{proof}
\begin{figure}[!t]
\begin{center}
\subfloat[{A closer look at the neighborhood of a sensor node in a
spatial distribution upon which the target distance function $H$ is
based.}]{
        \label{fig:node}
        \includegraphics[width=1.4in]{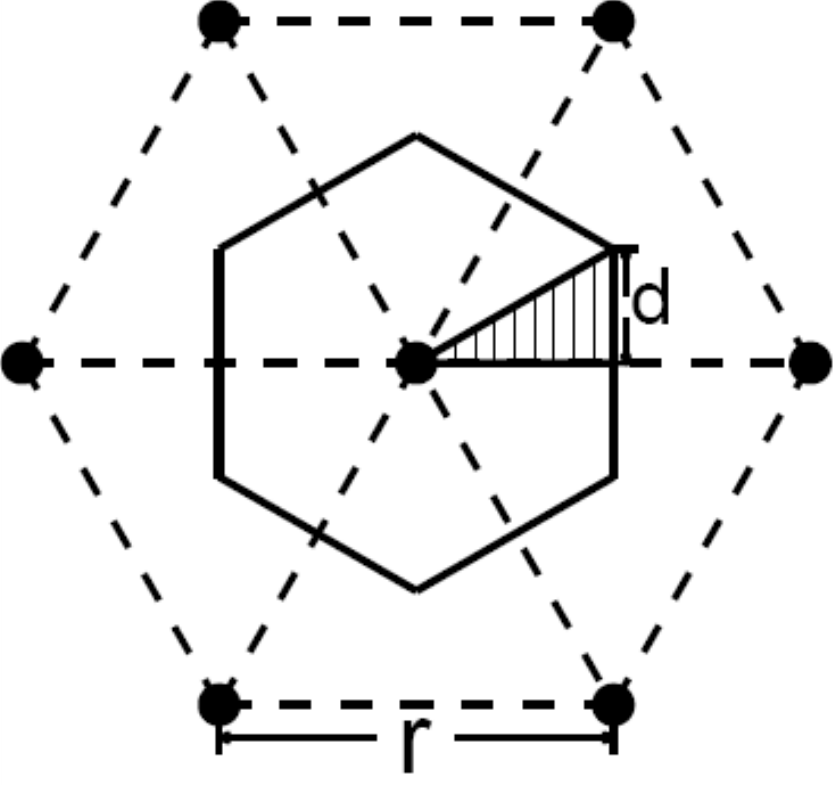}
        }
    \hskip0.2in 
\subfloat[{A closer look at the shaded triangle in (a).}]{
        \label{fig:triangle}
        \includegraphics[width=1.4in]{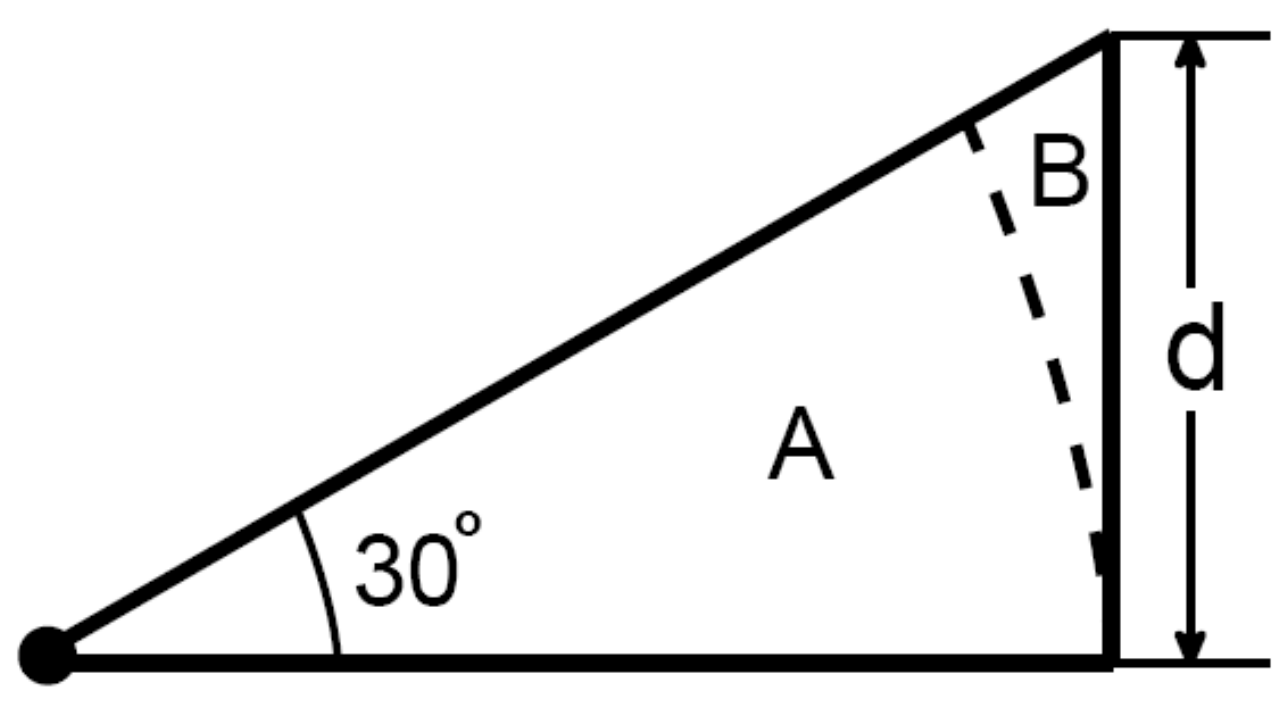}
        }
        \hskip0.2in
\caption{Geometric illustrations supporting the proof of
  Theorem~\ref{theorm:achievablegini}.}
\end{center}
\end{figure}
Given a 2-dimensional region of arbitrary shape, we know that
perfectly circular regions covered by each sensor node (as assumed
in the proof of Theorem\,\ref{theorem:lowerBound}) will not achieve
a space-filling tessellation of the area. Thus, the lower bounds on
the quality of representation metrics will be higher for regions of
arbitrary shape than those in Theorem\,\,\ref{theorem:lowerBound}. A
2-dimensional plane can achieve a regular symmetric tessellation
with only three types of tiles: equilateral triangles, squares, or
hexagons \cite{KinMoo2001}. When an active sensor node is placed at
the centroid of each tile, hexagonal tiles, being closer to a
circular shape, achieve better quality of representation than
triangles or squares. A region of interest of any arbitrary shape
can achieve a space-filling tessellation with an infinite number of
hexagonal tiles, each of infinitesimal size. Therefore, the quality
of representation achieved by infinite space-filling hexagonal
tiles, as shown in Fig.\,\,\ref{fig:model_uniform}, is the best
coverage that can be guaranteed for regions with an unknown
arbitrary shape. The metrics $D(G,R)$ and $U(G,R)$ for this scenario
represents, respectively, the upper bounds on the lower bounds of
$D(G,R)$ and $U(G,R)$ for regions of arbitrary shape.

Let $r$ denote the distance between each pair of neighboring active
sensor nodes (placed at the centroid of each hexagonal tile).
Consider each of the innermost triangles (made up of dashed lines in
Fig.\,\,\ref{fig:model_uniform}) with sensor nodes as vertices.
Consider the nodes near the center of the region (i.e., not at the
boundary); each node belongs to six triangles and therefore, each
triangle can be said to hold $3\times(1/6)=0.5$ nodes. Thus, given
$z$ nodes per unit area, the number of these triangles that can
cover a unit area is $z/0.5=2z$. Note that the area of each triangle
is $\sqrt{3}r^2/4$. Since the area covered by $2z$ of these
triangles is 1, we have $2z(\sqrt{3}r^2/4)=1$. Thus,
\begin{equation}
r=\sqrt{\frac{2}{\sqrt{3}z}}\label{equ:r}
\end{equation}For any
point within each hexagonal tile, the closest sensor node is the one
at the centroid of the tile. Since all the hexagonal tiles are
congruent and identical with respect to the sensor node within them,
$D(G,R)$ and $U(G,R)$ for the region of interest is the same as the
$D(G,R)$ and $U(G,R)$ for any one hexagonal tile. Consider one such
hexagonal tile, shown by solid lines in Fig.\,\,\ref{fig:node}. The
hexagonal tile can be divided into twelve non-overlapping congruent
right-angled triangles, one of which is shown shaded in
Fig.\,\,\ref{fig:node}. Since the triangles are all congruent and
also identical with respect to the placement of the nearest sensor
node, $D(G,R)$ and $U(G,R)$ for the hexagonal cell is the same as
the $D(G,R)$ and $U(G,R)$ for each triangle. Consider one such
triangle, shown in Fig.\,\,\ref{fig:triangle}.

Denote by $d$ the length of the shortest edge of the triangle in
Fig.\,\,\ref{fig:triangle}. Using elementary geometry,
$d=\sqrt{3}r/6$. The triangular area can be divided into two parts:
\begin{itemize}
\item Area A: the area in which the distance between any point in the
  region to the sensor node is no larger than $\sqrt{3}d$.
\item Area B: the area in which the distance between any point in the
  region to the sensor node is larger than $\sqrt{3}d$.
\end{itemize}
The two parts of the triangular region are shown in
Fig.\,\,\ref{fig:triangle}. Thus, the expected distance between a
point in the triangular area to the sensor node can be computed based
on the expected distances from points within each of these two parts:
\begin{eqnarray}
\overline{d}&=&\frac{\pi}{6}\int_0^{\sqrt{3}d} \frac{x^2
dx}{\frac{\sqrt{3}}{2}d^2}+\int_0^{\frac{\pi}{6}}\int_{\sqrt{3}d}^{\frac{\sqrt{3}d}{sin(\theta+\frac{\pi}{3})}}
\frac{x^2 dx d\theta }{\frac{\sqrt{3}}{2}d^2}\nonumber
\\&=&\left(\frac{2}{3}+\frac{1}{2}\ln 3\right)d \label{dbar}
\end{eqnarray}
Given $d=\sqrt{3}r/6$ and Eqns.\,\,(\ref{equ:strength of coverage})
and \,(\ref{equ:r}), we have:
\begin{eqnarray}
D(G,R)&=&\overline{d}\sqrt{z}=\left(\frac{2}{3}+\frac{1}{2}\ln 3\right)d
\sqrt{z} \nonumber\\
&=&\left(\frac{1}{9}+\frac{1}{12}\ln
  3\right)\sqrt{2\sqrt{3}} \nonumber \\
& \approx & 0.3772 \label{Dlimit}
\end{eqnarray}
We now proceed to derive $U(G,R)$. Consider two random
points whose distances to the sensor node are $x$ and $y$. There
are three cases where the locations of the two points may fall.
\begin{itemize}
\item Case 1: Both points fall within Area A.
\item Case 2: One of the points falls within Area A and the other within Area B.
\item Case 3: Both points fall within Area B.
\end{itemize}
We now consider each of the three cases. In the following,
$E_w[|x-y|]_i$ denotes the expected value of $|x-y|$ weighted by
the probability of Case $i$.

Case 1:
\begin{eqnarray}
E_w[|x-y|]_1&=&\int_0^{\sqrt{3}d}\int_0^y(y-x)\frac{\frac{\pi}{6} x
dx}{\frac{\sqrt{3}}{2}d^2} \frac{\frac{\pi}{6}y dy}{\frac{\sqrt{3}}{2}d^2}\nonumber \\
&   +&\int_0^{\sqrt{3}d}\int_0^x(x-y)\frac{\frac{\pi}{6}y
dy}{\frac{\sqrt{3}}{2}d^2}
\frac{\frac{\pi}{6}x dx}{\frac{\sqrt{3}}{2}d^2} \nonumber \\
&=&\frac{\sqrt{3}{\pi}^2 d}{45} \label{diffbar}
\end{eqnarray}

Case 2:

The probability that a point falls in Area A is
\[\int_0^{\sqrt{3}d} \frac{\frac{\pi}{6} x
dx}{\frac{\sqrt{3}}{2}d^2}\] and the probability that a point falls
in area B is:
\[\int_0^{\frac{\pi}{6}}\int_{\sqrt{3}d}^{\frac{\sqrt{3}d}{\sin(\theta+\frac{\pi}{3})}}\frac{y
dy d\theta}{\frac{\sqrt{3}}{2}d^2}\]
Define $k_\theta$ as:
\[k_\theta={\frac{\sqrt{3}d}{\sin(\theta+\frac{\pi}{3})}}\]
 Therefore,
\begin{eqnarray}
E_w[|x-y|]_2&=&2\int_0^{\frac{\pi}{6}}\int_{\sqrt{3}d}^{k_\theta}\int_0^{\sqrt{3}d}
(y-x)\frac{{\frac{\pi}{6} x dx}}{{\frac{\sqrt{3}}{2}d^2}} \frac{y dy
d\theta}{\frac{\sqrt{3}}{2}d^2}\nonumber \\
&=&\pi d(-\frac{2}{3}+\frac{2}{9}\sqrt{3}+\frac{1}{6}\sqrt{3}\ln 3)
\label{E_2}~~~~~~
\end{eqnarray}

Case 3:

The probability that both points fall in Area B is:
\[ \left( \int_0^{\frac{\pi}{6}}\int_{\sqrt{3}d}^{\frac{\sqrt{3}d}{\sin(\theta+\frac{\pi}{3})}}\frac{y
dy d\theta}{\frac{\sqrt{3}}{2}d^2} \right)^2\]
Define $k_\eta$ as:
\[k_\eta=\frac{\sqrt{3}d}{\sin(\eta+\frac{\pi}{3})}\]
Therefore,
\begin{eqnarray}
E_w[|x-y|]_3= \int_0^{\frac{\pi}{6}}
\int_{\sqrt{3}d}^{k_\theta}\int_0^{\frac{\pi}{6}}\int_{\sqrt{3}d}^{k_\eta}|x-y|\frac{dy
 d\eta}{\frac{\sqrt{3}}{2}d^2}\frac{dx
 d\theta}{\frac{\sqrt{3}}{2}d^2}
\end{eqnarray}
Since $E_w[|x-y|]_3$ is bounded above as follows:
\begin{eqnarray}
E_w[|x-y|]_3&<&2(\int_0^{\frac{\pi}{6}}\int_{\sqrt{3}d}^{k_\theta}\frac{y
dy d\theta}{\frac{\sqrt{3}}{2}d^2})^2
(2-\sqrt{3})\nonumber\\&=&2(1-\frac{\sqrt{3}\pi}{6})^2
(2-\sqrt{3})d\nonumber\end{eqnarray}
Thus, the overall expected
difference between distances to the sensor node is:
\begin{eqnarray}
E[|x-y|]&=&{E_w[|x-y|]}_1+{E_w[|x-y|]}_2+{E_w[|x-y|]}_3\nonumber\\
&<& \frac{\sqrt{3}{\pi}^2 d}{45}+\pi
d(-\frac{2}{3}+\frac{2}{9}\sqrt{3}+\frac{1}{6}\sqrt{3}\ln
3)\nonumber\\&+&2(1-\frac{\sqrt{3}\pi}{6})^2 (2-\sqrt{3})d\nonumber
\end{eqnarray}
Using the above inequality with Eqns.\,\,(\ref{equ:strength
of coverage}) and (\ref{dbar}),
\[
U(G,R) < 0.2038
\]
\end{proof}

\section{}
\label{sec:poisson} In this section of the appendix, we consider
$D(G,R)$ and $U(G,R)$ when the spatial distribution of active nodes
is given by a Poisson point process.
\begin{theorem}
If the spatial distribution of active nodes is given by a Poisson
point process, the expected values of $D(G,R)$ and $U(G,R)$ are
$0.5$ and $1 - 1/\sqrt{2}$, respectively.
\label{theorm:poisson_metric}
\end{theorem}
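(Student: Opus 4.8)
The plan is to reduce both assertions to the distribution of a single scalar random variable: the distance $W$ from a uniformly random point of $R$ to its nearest active node. Under a homogeneous Poisson point process of intensity $z$ (active nodes per unit area), the number of active nodes inside a disk of radius $x$ about a fixed point is Poisson with mean $z\pi x^2$, so that disk is empty with probability $e^{-z\pi x^2}$. Hence $W$ has cumulative distribution function $F(x) = 1 - e^{-z\pi x^2}$ and density $f(x) = 2\pi z\, x\, e^{-z\pi x^2}$ on $x\ge 0$. Throughout I work in the idealized infinite-plane model; the statement for a bounded $R$ then holds up to the border effects already noted in the main text.

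First I would compute $\overline{d_R(G)}$, which is simply $E[W]$. This is the Gaussian-type moment $\int_0^\infty 2\pi z\, x^2 e^{-z\pi x^2}\,dx$; using $\int_0^\infty x^2 e^{-ax^2}\,dx = \tfrac{1}{4}\sqrt{\pi/a^3}$ with $a = z\pi$ gives $\overline{d_R(G)} = 1/(2\sqrt{z})$. Substituting into Eqn.\,(\ref{equ:strength of coverage}) then yields $D(G,R) = \overline{d_R(G)}\sqrt{z} = 1/2$ at once. (Reassuringly, $1/2$ exceeds the lower bound $2/(3\sqrt{\pi})$ of Theorem~\ref{theorem:lowerBound}, as it must.)

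Next I would treat $U(G,R)$. Let $W_p$ and $W_q$ be the nearest-active-node distances of two independent uniformly random points $p,q\in R$. In the infinite homogeneous model these are independent copies of $W$: passing to the whole plane forces $|p-q|\to\infty$, and the Poisson configurations in disjoint disks around $p$ and around $q$ are independent by the defining independence property of the Poisson process. For i.i.d.\ nonnegative $X,Y$ with CDF $F$ one has the elementary identity $E[|X-Y|] = 2\int_0^\infty F(t)\left(1-F(t)\right)dt$, which follows from $E[|X-Y|] = E[\max(X,Y)] - E[\min(X,Y)]$ together with $E[\max(X,Y)] = \int_0^\infty (1-F(t)^2)\,dt$ and $E[\min(X,Y)] = \int_0^\infty (1-F(t))^2\,dt$. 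With $1-F(t) = e^{-z\pi t^2}$ this becomes $E[|W_p-W_q|] = 2\int_0^\infty \left(e^{-z\pi t^2} - e^{-2z\pi t^2}\right)dt$, and evaluating $\int_0^\infty e^{-at^2}\,dt = \tfrac{1}{2}\sqrt{\pi/a}$ for $a=z\pi$ and $a=2z\pi$ gives $E[|W_p-W_q|] = (1 - 1/\sqrt{2})/\sqrt{z}$. Plugging this and $\overline{d_R(G)} = 1/(2\sqrt{z})$ into Eqn.\,(\ref{equ:giniIndex}) makes the factor $\sqrt{z}$ cancel and leaves $U(G,R) = 1 - 1/\sqrt{2} \approx 0.293$.

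The main obstacle is not any of these integrals, which are routine, but the single modelling step: justifying that $W_p$ and $W_q$ may be treated as independent, i.e.\ making precise the passage from a finite deployment to the stationary infinite-plane process, or equivalently invoking the stationarity/ergodicity of the Poisson process so that the expectation ``over a random point'' in Eqns.\,(\ref{equ:strength of coverage})--(\ref{equ:giniIndex}) is computed against the typical-point (Palm) distribution. Once that is granted, the proof is a one-line integral evaluation for $D(G,R)$ and a three-line one for $U(G,R)$.
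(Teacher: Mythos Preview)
Your proof is correct. For $D(G,R)$ you do exactly what the paper does: identify the nearest-neighbor distance density $f(x)=2\pi z x\,e^{-z\pi x^2}$ and integrate to get $E[W]=1/(2\sqrt{z})$.

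For $U(G,R)$ your route is genuinely different and cleaner. The paper computes $E[|x-y|]$ by writing out the double integral $\int p_W(x)\int p_W(y)\,|x-y|\,dy\,dx$, splitting into the regions $y<x$ and $y>x$, evaluating the inner integrals by hand, introducing an auxiliary $g(x)=\int_0^x e^{-z\pi y^2}\,dy$, and then integrating by parts in the outer variable. You bypass all of that by invoking the order-statistic identity $E[|X-Y|]=2\int_0^\infty F(t)\bigl(1-F(t)\bigr)\,dt$ for i.i.d.\ nonnegative $X,Y$, which collapses the computation to a single Gaussian integral in one line. This buys a much shorter argument and makes transparent why the answer is a pure number times $1/\sqrt{z}$. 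The paper's approach, by contrast, is self-contained and does not rely on knowing the identity in advance. You are also more explicit than the paper about the modelling step (independence of $W_p,W_q$ via the infinite-plane idealization); the paper simply stipulates that the region is ``large enough to ignore boundary issues'' and proceeds.
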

\begin{proof}
Consider active sensor nodes randomly distributed in the region of
interest, $R$, given by a Poisson process of rate $z$ active nodes
per unit area. Therefore, the probability that we will have $k$
nodes within some area $S$ is given by:
\begin{equation}
\label{poisson-def} P(k,S) = \frac{(zS)^ke^{-zS}}{k!} \end{equation}
In the following, we assume that the region of interest is large
enough to ignore boundary issues.

From Eqn.\,\,(\ref{poisson-def}), the probability that there are $0$
active nodes within a radius of $r$ is given by:
\begin{equation} \label{noNode} P(0, \pi r^2 ) = \frac{(z\pi
r^2)^{0} e^{-z\pi r^2}}{(0)!}=e^{-z\pi r^2}\end{equation}

Consider a ring of radius $r$ of infinitesimal area equal to $2\pi r
dr$. Using Eqn.\,\,(\ref{poisson-def}) again and noting that $2\pi r
z dr \rightarrow 0$ implies $e^{- 2\pi r z dr} \rightarrow 1-2\pi r
zdr$, the probability that there is exactly one active node on this
ring is given by:
\begin{equation}
P(1, 2\pi r dr)  = \frac{ (2\pi r z dr)^1 e^{- 2\pi r z \,dr}}{1!}
\approx 2 \pi r z\,dr \label{ring}
\end{equation}
Thus, the expected distance, $E[X_1]$, between the node and its nearest
neighbor is given by:
\begin{eqnarray}
E[X_1] & = & \int_0^\infty r P(0,\pi r^2) P(1, 2\pi r dr)\nonumber \\
&\approx& \int_0^\infty re^{-z\pi r^2}2\pi rzdr \nonumber \\
& = & \frac{1}{2\sqrt{z}}
\end{eqnarray}

Using Eqn.\,\,(\ref{equ:strength of coverage}), we get:
\[
D(G,R) = \frac{1}{2}
\]

We now derive the expected value of $U(G,R)$. Let $W$ denote a random
variable indicating the distance of a random point from its nearest
node. The probability density function of $W$ is given by:
\[
p_W(r) = 2\pi r e^{-\pi r^2z}
\]
Consider any two random points whose distances to their respective
nearest nodes are $x$ and $y$. Now,
\begin{eqnarray}
E[|x - y|] &=& \int_0^\infty p_W(x) \int_0^\infty p_W(y) |x-y|\,dy\,
dx\nonumber \\
&=&\int_0^\infty p_W(x) \int_0^{x} p_W(y)(x-y)\, dy\, dx \nonumber
\\&+&\int_0^\infty p_W(x)\int_{x}^\infty p_W(y)(y-x)
\,dy\,dx\nonumber
\end{eqnarray}
Focusing first on the inner integrals and simplifying, we get the
following two results:
\begin{eqnarray}
\int_0^{x} p_W(y)(x-y) dy &=& \int_0^{x}e^{-z\pi y^2}2\pi
yz(x-y)dy\nonumber\\
&=&x-\int_0^{x}e^{-z\pi y^2}dy \label{firstInner}
\end{eqnarray}
\begin{eqnarray}
\int_{x}^\infty p_W(y) (y-x) dy &=& \int_{x}^\infty e^{-z\pi
y^2}2\pi yzdy(y-x) dy \nonumber
\\&=&\int_{x}^\infty e^{-z\pi y^2}dy
\label{secondInner}
\end{eqnarray}

Define $g(x)$ as follows:
\[
g(x)=\int_0^{x}e^{-z\pi y^2}dy
\]
Since $g(\infty)=\frac{1}{2}\sqrt{\frac{1}{z}}$,
\[
\int_{x}^\infty e^{-z\pi y^2}dy=\frac{1}{2}\sqrt{\frac{1}{z}}-g(x)
\]
Using (\ref{firstInner}) and (\ref{secondInner}), $E[ |x-y| ]$ may
be expressed as:
\begin{eqnarray}
&~~& \int_0^\infty p_W(x)[x+\frac{1}{2}\sqrt{\frac{1}{z}}-2g(x)] dx\nonumber \\
&\approx&\int_0^\infty e^{-z\pi x^2}2\pi xz \left[
  x+\frac{1}{2}\sqrt{\frac{1}{z}}-2g(x) \right] dx\nonumber\\
&=& \sqrt{\frac{1}{z}}+ 2 e^{-z\pi x^2}g(x)|_0^\infty -
\int_0^\infty 2e^{-z\pi x^2} \frac{d(g(x))}{dx} \,dx \nonumber
\end{eqnarray}
Simplifying further using routine calculus, we get:
\begin{equation}
E[ | x-y | ] = \sqrt{\frac{1}{z}}(1-\sqrt{\frac{1}{2}})
\label{diffavg}
\end{equation}
Given that the expected distance to the nearest node is $1/2\sqrt{z}$,
using Eqn.\,\,(\ref{diffavg}) in the definition of $U(G,R)$, we get:
\[
U(G,R) =
\frac{1}{2\frac{1}{2}\sqrt{\frac{1}{z}}}\sqrt{\frac{1}{z}}(1-\sqrt{\frac{1}{2}})
= 1-\sqrt{\frac{1}{2}} \nonumber
\]
\end{proof}

\bibliographystyle{IEEEtran}
\bibliography{report-EvenRep}
\end{document}